\documentclass[11pt,a4paper]{article} 
\usepackage{a4wide} 
\usepackage[utf8]{inputenc}
\usepackage[OT2,T1]{fontenc}
\usepackage[english]{babel}
\usepackage{pdfsync}
\usepackage{amsmath,amsthm,amssymb,cmll,bussproofs}
\usepackage{tikz}
\usepackage{hyperref}
\usepackage{enumitem}
\usepackage{color}
\usepackage{cleveref}
\usepackage{aliascnt}

\theoremstyle{plain}
\newtheorem{theorem}{Theorem}
\newtheorem{corollary}{Corollary}[theorem]
\newaliascnt{claim}{theorem}
\newtheorem{claim}[claim]{Claim}
\aliascntresetthe{claim}
     \crefname{claim}{claim}{claims}
\newaliascnt{proposition}{theorem}
\newtheorem{proposition}[proposition]{Proposition}
\aliascntresetthe{proposition}
     \crefname{proposition}{proposition}{propositions}
\newaliascnt{lemma}{theorem}
\newtheorem{lemma}[lemma]{Lemma}
\aliascntresetthe{lemma}
 \crefname{lemma}{lemma}{lemmas}

\theoremstyle{definition}
\newtheorem{definition}{Definition}

\theoremstyle{remark}
\newtheorem{remark}{Remark}

\title{Agafonov's Proof of Agafonov's Theorem: A Modern Account and New Insights}
\author{Thomas Seiller and Jakob Grue Simonsen}
\date{}

\newcommand{\seq}[1]{\mathbf{#1}}
\newcommand{\countones}[2][1]{\#_{#1}(#2)}
\newcommand{\card}{\mathrm{Card}}
\newcommand{\fqcy}[3][x]{{\mathrm{freq}_{#2}(#1;#3)}}
\newcommand{\naturalN}{\mathbf{N}}
\newcommand{\abs}[1]{\mathopen{|}#1\mathclose{|}}
\newcommand{\len}[1]{\mathrm{len}(#1)}
\newcommand{\pickedout}[2]{#2[\seq{#1}]}
\newcommand{\freq}[2][1]{\mathrm{freq}_{#1}(#2)}

\newcommand{\enquote}[1]{``#1''}

\begin{document}
\maketitle


\begin{abstract}
We give a modern account of Agafonov's original proof of his eponymous theorem. The original proof was only reported in Russian \cite{Agafonov,AgafonovRussianLong} in a journal not widely available,
and the work most commonly cited in western literature is instead the English translation \cite{Agafonovsummary} of a summary version containing no proofs \cite{AgafonovsummaryRussian}, and the main proof relied heavily on material well-known in Russian mathematical circles of the day, which perhaps obscures the main thrust of argumentation for modern readers.

Our present account recasts Aganofov's arguments using more basic building blocks than in the original proof, and contains some further embellishments to Agafonov’s original arguments, made in the interest of clarity. We posit that the modern account provides new insight to the underlying phenomena of the theorem.

We also provides some historical context to Agafonov's work, including a short description of some of the ideas that led to Agafonov's own proof, especially emphasizing the important work of Postnikova.
\end{abstract}

We give an account of Agafonov's original proof of his eponymous theorem. The original proof was only reported in Russian \cite{Agafonov,AgafonovRussianLong} in a journal not widely available,
and the work most commonly cited in western literature is instead the english translation \cite{Agafonovsummary} of a summary version containing no proofs \cite{AgafonovsummaryRussian}.

The account contains some embellishments to Agafonov's original arguments, made in the interest of clarity:

\begin{enumerate}


\item The original proof relies on results of Postnikova \cite{Postnikova}. We detail Postnikova's contribution and provide some historic context to her result.

\item The original proof contained a mixture of arguments expressed both via running text and explicit lemmas and theorems. While we have retained the general flow
of argumentation from the original, we have used explicit lemmas and propositions for a number of observations occurring in the running text.

\item We have made several arguments explicit and provided detailed arguments in places where Agafonov relied on immediate understanding
from his specialist audience, but where we believe that non-expert readers with modern sensibilities might prefer more elaborate explanations. 
The most pertinent examples are:
\begin{enumerate}

\item  We explicitly prove why it suffices to prove that a connected finite automaton picks out $b \in \{0,1\}^n$ for $n=1$ with limiting frequency $p$
from any $p$-distributed sequence (\Cref{lem:simply_normal_is_enough}).

\item We have appealed directly to probabilistic reasoning (using Chebyshev's Inequality) in the proof that, \emph{par abus de langage}, the probability of deviation
from probability $p$ among the symbols selected by a finite automaton from sets of substrings picked from a $p$-distributed sequence tends to zero with increasing length of the strings 
(\Cref{main:lemma2}). In \cite{AgafonovRussianLong}, this was essentially proved by a reference to the Strong Law of Large Numbers and a statement that the proof was similar to 
Lemma 3 of Loveland's paper \cite{Loveland66}.

\end{enumerate}

\end{enumerate}

\paragraph{Acknowledgements.} The authors warmly thank \href{https://www.mimuw.edu.pl/~lukaszcz/}{Łukasz Czajka} and \href{https://avolkova.org}{Anastasia Volkova} for their help in translating the russian documents.


\section{Preliminaries} 

If $\alpha = a_1 a_1 \cdots $ is a right-infinite sequence
over an alphabet $\mathcal{A}$ and $N$ is a positive integer, we denote by
$\alpha \vert_{\leq N}$ the finite string $a_1 a_2 \cdots a_N$.

We denote by $\mathcal{A}^*$ the set of (finite) words over $\mathcal{A}$ and by $\mathcal{A}^+$ the set of finite \emph{non-empty} words over $\mathcal{A}$.

\begin{definition}
A finite probability map (over an alphabet $\mathcal{A}$) is a map $p : \mathcal{A}^+ \longrightarrow [0,1]$
such that, for all positive integers $n$, $\sum_{a_1 \cdots a_n \in \mathcal{A}^n} p(a_1 \cdots a_n) = 1$.

A finite probability map $p$ is said to be:

\begin{itemize}

\item  \emph{Bernoulli} if, for all positive integers $n$,
and all $a_1,\ldots,a_n \in \mathcal{A}$,
$p(a_1 \cdots a_n) = \prod_{j=1}^n p(a_j)$.

\item \emph{Equidistributed} if, for any string
$a_1 \cdots a_n \in \mathcal{A}^{n}$,
 $p(a_1 \cdots a_n) = \vert \mathcal{A} \vert^{-n}$.

\end{itemize}
\end{definition}

Observe that an equidistributed $p$ is also Bernoulli.
For alphabets $\vert \mathcal{A} \vert > 1$,
any map $g : \mathcal{A} \longrightarrow
[0,1]$ with $\sum_{a \in \mathcal{A}} g(a) = 1$
induces a Bernoulli finite probability map $p_g$ by
letting $p_g(a_1 \cdots a_n) \triangleq \prod_{j=1}^n g(a_j)$.
This map is equidistributed if{f} $g(a) = \vert A \vert^{-1}$
for every $a \in \mathcal{A}$.

The use of the word ``Bernoulli'' is due to the fact that Bernoulli finite probability maps
correspond directly to the measure of cylinders in Bernoulli shifts \cite{shields:bernoulli}; in the literature
on normal numbers, the word Bernoulli is sometimes used slightly differently, for example Schnorr and Stimm \cite{DBLP:journals/acta/SchnorrS72}
use the term Bernoulli sequences for sequences distributed according to 
finite probability map that are equidistributed in our terminology.

We are interested in the finite probability maps whose values can be realized as the limiting frequencies 
of finite words in right-infinite sequences over $\{0,1\}$.

\begin{definition}\label{def:pdistrib}
Let ${b} = b_1 \cdots b_N$ and ${a} = a_1 \cdots a_n$ be finite words over $\mathcal{A}$. We denote by
$\countones[{a}]{{b}}$ the number of occurrences of ${a}$ in ${b}$, that is, the quantity
$$
\left\vert \left\{j : b_j b_{j+1} \cdots b_{j+n-1} = a_1 a_2 \cdots a_n \right\}\right\vert
$$

Let $p$ be a finite probability map over $\mathcal{A}$, and be $\alpha$ is a right-infinite sequence over $\mathcal{A}$.
If the limit
$$
\freq[{a}]{\alpha}=\lim_{N \rightarrow \infty} \frac{\countones[{{a}}]{\alpha\vert_{\leq_N}}}{N}
$$
exists and is equal to some real number $f$, we say that ${a}$ occurs in $\alpha$ \emph{with limiting frequency} $f$.
If every ${a} \in \mathcal{A}^+$ occurs in $\alpha$ with limiting frequency $p({a})$,
we say that $\alpha$ is $p$-distributed.
\end{definition}

Observe that a right-infinite sequence $\alpha$ is normal in the usual sense if{f} it is $p$-distributed for (the unique) equidistributed
finite probability map $p$ over $\mathcal{A}$. Also observe that it is not all finite probability maps $p$ for which there
exists a $p$-distributed sequence.

An example of a finite probability map that is \emph{not} Bernoulli,
but such that there is at least one $p$-distributed right-infinite sequence,
is the map $b$ defined by $b(\alpha) = 1/2$ if $\alpha$ does not contain either of the
strings $00$ or $11$ (note that for each positive integer $n$, there
are exactly two such strings of length $n$ of each length), and
$b(\alpha) = 0$ otherwise. Observe that the right-infinite sequence
$010101 \cdots$ is $p$-distributed.

In the remaining sections, we will work with the alphabet $\{0,1\}$ unless otherwise specified.

\section{Preliminaries and Historical aspects} 

\subsection{Borel}

The notion of $p$-distributed sequences can be traced back to a 1909 paper by \'{E}mile Borel \cite{Borel}. In this work, Borel studies the decimal representation of numbers and introduces the following definitions.

\begin{definition}[Borel normality]
Consider an integer $b>1$. Consider a number $0<a<1$ and denote by $\alpha^b$ its decimal sequence $a^b_1,\dots,a^b_n,\dots \in \{0,1,\dots,b-1\}^\omega$ in base $b$, i.e. $a=\sum_n \frac{a^b_n}{b^n}$. Then $x$ is said to be:
\begin{enumerate}
\item \emph{simply normal} w.r.t. the basis $b$ when $\freq[c]{\alpha}=\frac{1}{b}$ for all $c\in\{0,1,\dots,b-1\}$;
\item \emph{entirely normal} (or just \emph{normal}) w.r.t. the basis $b$ when for all integers $n,k$ the number $b^k x$ is simply normal w.r.t. the basis $b^m$;
\item \emph{absolutely normal} if it is entirely normal w.r.t. every possible basis $b$.
\end{enumerate}
\end{definition}

Borel already remarks that normality correspond to what we introduced as $p$-distribution\footnote{The translation is ours, in which we replaced the basis $10$ considered by Borel with a parametrised basis $b$.}:
\begin{quote}
The characterising property of a normal number\footnote{I.e. entirely normal w.r.t. the basis $b$, where $b=10$ in Borel's original paper.} is the following: considering a sequence of $p$ symbols, denoting by $c_n$ the number of times this sequence is to be found within the n first decimal numbers, we have $\lim_{n\rightarrow\infty} \frac{c_n}{n}=\frac{1}{b^p}$.
\end{quote}

The main result of Borel on normal numbers is the following theorem.

\begin{theorem}[Borel \cite{Borel}]
The probability that a number is absolutely normal is equal to 1, i.e. almost all numbers are absolutely normal.
\end{theorem}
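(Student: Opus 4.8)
The plan is to reduce the statement to a countable family of elementary probabilistic claims and then to use that a countable union of null sets is null. First I would unwind the definition: a number $x$ is absolutely normal precisely when, for every base $b>1$, every shift exponent $k$, and every block exponent $m$, the number $b^k x$ is simply normal with respect to the base $b^m$; and simple normality with respect to a base $B$ is itself the conjunction, over the finitely many digits $c \in \{0,\dots,B-1\}$, of the single assertion that $c$ occurs with limiting frequency $1/B$. Since $b$, $k$, $m$, and $c$ range over countable sets, absolute normality is an intersection of countably many conditions, each of which states that a particular digit occurs with the correct frequency in a particular (possibly shifted) expansion. It therefore suffices to show that each such condition fails only on a set of Lebesgue measure zero.

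Second, I would fix a base $B \geq 2$ and a digit $c \in \{0,\dots,B-1\}$, and model the base-$B$ digits of a uniformly chosen $x \in (0,1)$ as random variables $a_1,a_2,\dots$. Outside the null set of $B$-adic rationals, where the expansion is ambiguous, these digits are independent and each is uniformly distributed on $\{0,\dots,B-1\}$; this follows at once from the fact that the Lebesgue measure of the cylinder fixing any finite prefix of $\ell$ digits equals $B^{-\ell}$. Setting $X_i = \mathbf{1}[a_i = c]$, the $X_i$ become independent Bernoulli variables of mean $1/B$, and the condition ``$c$ occurs with frequency $1/B$'' becomes exactly $\tfrac{1}{n}\sum_{i=1}^n X_i \to 1/B$.

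Third, I would establish this convergence almost everywhere, and here I would follow Borel's original argument rather than cite the later Strong Law of Large Numbers. Writing $Y_i = X_i - 1/B$, independence together with $\mathbb{E}[Y_i]=0$ annihilates every term in which some index appears with multiplicity one, so that $\mathbb{E}\bigl[(\sum_{i=1}^n Y_i)^4\bigr] = O(n^2)$. Applying Chebyshev's inequality to the fourth power yields $\mathbb{P}\bigl[\abs{\tfrac{1}{n}\sum_{i=1}^n Y_i} > \varepsilon\bigr] = O\bigl(1/(\varepsilon^4 n^2)\bigr)$, which is summable in $n$; the Borel--Cantelli lemma then shows that, for each fixed $\varepsilon$, the deviation exceeds $\varepsilon$ only finitely often almost surely. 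Intersecting over a sequence $\varepsilon \to 0$ gives $\tfrac{1}{n}\sum_{i=1}^n X_i \to 1/B$ almost everywhere.

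Finally I would reassemble: each of the countably many digit-frequency conditions holds off a null set, so their intersection, namely the set of absolutely normal numbers, has full measure. The shifts $b^k x$ cause no trouble, since $x \mapsto b^k x \bmod 1$ sends null sets to null sets. The principal obstacle I anticipate is not any individual estimate but the bookkeeping of the first paragraph: verifying that Borel's layered definition really is captured by countably many single-digit frequency statements, and quarantining the null set of ambiguous $B$-adic expansions carefully enough that the independence claim in the second step is literally correct.
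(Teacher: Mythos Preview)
Your argument is correct and is in fact the classical route attributed to Borel himself: reduce absolute normality to a countable conjunction of single-digit frequency statements, use independence of digits under Lebesgue measure, bound deviations via a fourth-moment Chebyshev estimate to obtain a summable tail, and conclude with Borel--Cantelli. There is nothing wrong with any of the steps you outline, and the bookkeeping worries you raise (countability of the conditions, the null set of $B$-adic rationals, measure-preservation of the shift) are all easily handled as you indicate.

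However, there is nothing to compare against: the paper does not supply a proof of this theorem. It is stated in the historical section with a citation to Borel's 1909 paper and immediately used as background (``As a consequence, the probability that a number is normal, or simply normal, is also equal to $1$\ldots''). So your proposal is not an alternative to the paper's proof but a proof where the paper offers none; it matches what one finds in Borel's original work and in standard expositions.
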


As a consequence, the probability that a number is normal, or simply normal, is also equal to $1$. In particular, the cardinality of the set of normal numbers is equal to the cardinality of the continuum $2^{\aleph_0}$, and normal numbers are dense in the set of all real numbers.

\subsection{von Mises}

The notion of $p$-distributed sequences also appeared in connection with the notion of \emph{kollektiv} introduced by von Mises in order to capture the concept of \emph{random sequence}. The intuition behind von Mises approach it that a random sequence is one that cannot be predicted. I.e. the frequency of each possible outcome is independent from a the choice of a \emph{Spielsystem}, i.e. a way to predict the outcome of successive trials. In other words, a sequence of trials outcomes is \emph{not} random whenever there exists a strategy to select a subsequence of the trials in order to modify the frequency of the outcomes. This is expressed as the second condition in the following definition.
As reported by Church \cite{Church40}, a sequence $\alpha=a_1,a_2,\dots,a_n,\dots$ in $\{0,1\}^{\omega}$ is a \emph{kollektiv} according to von Mises \cite{vonMiseskollektiv,vonMises36} when:
\begin{enumerate}
\item $\freq[1]{\alpha}$ is defined and equal to $p$;
\item if $\beta=a_{n_1},a_{n_2},\dots$ is any infinite sub-sequence of $\alpha$ formed
by deleting some of the terms of the latter sequence according to a
rule which makes the deletion or retention of $a_n$ depend only on $n$ and
$a_1,a_2,\dots,a_{n-1}$, then $\freq[1]{\beta}$ is defined and equal to $p$.
\end{enumerate}
However, Church judges this definition to be "too inexact in form to serve satisfactorily as the basis of a mathematical theory" and proposes the following formalisation.

\begin{definition}[von Mises kollektiv]
Let $\alpha$ be a sequence $a_1,a_2,\dots,a_n,\dots$ in $\{0,1\}^{\omega}$. It is a \emph{kollektiv} (in the sense of von Mises, as formalised by Church) when:
\begin{enumerate}
\item $\freq[1]{\alpha}$ is defined and equal to $p$;
\item If $\varphi$ is any function of positive integers, if\footnote{Note that the terms $b_n$ are written as follows in binary $b_n=1a_1 a_2\dots a_{n-1}$.} $b_1=1$, $b_{n+1}=2b_n+a_n$, $c_n=\varphi(b_n)$, and the integers $n$ such that $c_n= 1$ form in order of magnitude an infinite sequence $n_1,n_2,\dots$, then the sequence $\beta=a_{n_1},a_{n_2},\dots$ satisfies that $\freq[1]{\beta}$ is defined and equal to $p$.
\end{enumerate}
\end{definition}

In this section, several other notions of kollektiv will be discussed and introduced, and we will therefore use the following definitions.

\begin{definition}[Strategy]
A strategy $S$ is a predicate over the set of finite binary words, i.e. $S\subset\{0,1\}^{*}=\cup_{i=0}^{\omega}\{0,1\}^{i}$.
\end{definition}

\begin{definition}[Selected Subsequence]
Given a strategy $S$ and an infinite sequence $\alpha=a_1,a_2,\dots,a_n,\dots$ in $\{0,1\}^{\omega}$, we define the sequence $S(\alpha)$ as follows. Let $i_{1},i_{2},\dots,i_{k},\dots$ be the (increasing) sequence of indices $j$ such that $\alpha\vert_{\leq j-1}\in S$.
\[ S(\alpha)_{j}= a_{i_{j}} \]
\end{definition}

\begin{definition}[Kollektiv]
A sequence $\alpha=a_1,a_2,\dots,a_n,\dots$ in $\{0,1\}^{\omega}$ is a \emph{kollektiv} w.r.t. a set of strategies $\mathbf{S}$ when:
\begin{enumerate}
\item $\freq[1]{\alpha}$ is defined and equal to $p$;
\item for any strategy $S\in\mathbf{S}$, $\freq[1]{S(\alpha)}$ is defined and equal to $p$.
\end{enumerate}
\end{definition}

\subsection{Church}

With this definition, the notion of von Mises kollektiv coincides with that of kollektiv w.r.t. the set of all strategies. As discussed by several authors \cite{Tornier29,Reichenbach32,Kamke33,Copeland36}, this notion of kollektiv is however inadequate, because it is too restrictive. This is further explained by Church, who explains why no kollektiv can exist if one considers such a strong notion:
\begin{quote}
[...] it makes the class of random sequences associated with any probability $p$
other than $0$ or $1$ an empty class. For the failure of (2) may always
be shown by taking $\varphi(x) = a_{\mu(x)}$ where $\mu(x)$ is the least positive integer $m$
such that $2^m >x$: the sequence $a_{n_1}, a_{n_2}, \dots$ will then consist of those and only 
those terms of $a_1, a_2,\dots$ which are 1's\footnote{Indeed, the function defined by Church ensures that $c_n=a_n$.}.
\end{quote} 
As a consequence, Church introduces a new notion of kollektiv, by factorising in the notion of computability. This choice is furtehr argumented as follows:
\begin{quote}
the scientist concerned with making predictions or probable predictions of some phenomenon must employ an effectively calculable function : if the law of the phenomenon is not approximable by such a function, prediction is impossible. Thus a Spielsystem should be represented mathematically, not as a function, or even as a definition of a function, but as an effective algorithm for the calculation of the values of a function.
\end{quote}

\begin{definition}[Church kollektiv]
Let $\alpha$ be a sequence $a_1,a_2,\dots,a_n,\dots$ in $\{0,1\}^{\omega}$. It is a \emph{kollektiv} (in the sense of Church) when:
\begin{enumerate}
\item $\freq[1]{\alpha}$ is defined and equal to $p$;
\item If $\varphi$ is any \emph{effectively calculable}\footnote{Today, one would rather use the terminology "computable".} function of positive integers, if $b_1=1$, $b_{n+1}=2b_n+a_n$, $c_n=\varphi(b_n)$, and the integers $n$ such that $c_n= 1$ form in order of magnitude an infinite sequence $n_1,n_2,\dots$, then the sequence $\beta=a_{n_1},a_{n_2},\dots$ satisfies that $\freq[1]{\beta}$ is defined and equal to $p$.
\end{enumerate}
\end{definition}

\subsection{Admissible sequences}

Towards the general purpose of defining mathematically the notion of random sequence, other notions were also considered at the time. For our purpose, the notions of "admissible number" introduced by Copeland \cite{Copeland28}, also studied by Reichenbach under the name "normal number" \cite{Reichenbach32,Reichenbach37} will be of interest. 

\begin{definition}[Copeland-admissible sequence.]
Let $\alpha=a_1,a_2,\dots,a_n,\dots$ be a sequence in $\{0,1\}^{\omega}$. For all integers $r,n$, define the sequence
\[ (r/n)\alpha = a_r, a_{r+n}, \dots, a_{r+kn}, \dots\]
The sequence $\alpha$ is \emph{admissible} (in the sense of Copeland) if the following are satisfied:
\begin{enumerate}
\item For all $r,n$, $\freq[1]{(r/n)\alpha}$ is defined and equal to $p$.
\item $(1/n)\alpha, (2/n)\alpha, \dots, (n/n)\alpha$ are \emph{independent}\footnote{Independence here is understood in terms of probability theory, as is detailed in Copeland's paper in which he states that two numbers are independent if and only if $p(x\cdot y)=p(x)\cdot p(y)$.} numbers.
\end{enumerate}
\end{definition}

Note that Copeland remarks that this second item is a consequence of the assumption that the sequence is obtained by independent trials (i.e. "the probability of success is a constant and does not vary from one trial to the next"). Church notes the connection between this notion of normal numbers and that of "completely normal number" by Armand Borel \cite{Borel}: 
\begin{quote}
These admissible numbers (to adopt Copeland's term) are closely
related to the normal numbers of Borel -- indeed an admissible number associated with the probability $\frac{1}{2}$ is the same as a number enti\`{e}rement normal to the base 2.
\end{quote}

\subsection{Postnikov and Pyateskii}

Around twenty years after Church's paper, a notion of of \emph{Bernoulli-normal} sequences was introduced by russian mathematicians, Postnikov and Pyateskii \cite{PostnikovPyateskii}. This notion coincides with the notion of $p$-distributed sequence defined above.

\begin{definition}[Bernoulli-normal sequence]\label{def:Bernoullinormal}
Let $\alpha\in\{0,1\}^{\omega}$ be a sequence $a_1,a_2,\dots,a_n,\dots$ and consider for every integer $s>0$ the $s$-th \emph{caterpillar} of x: 
\[\beta^s=(a_1,\dots,a_{s-1}),(a_2,\dots,a_{s}),\dots,(a_P,\dots,a_{P+s-1}),\dots.\]
The sequence $\alpha$ is \emph{Bernoulli-normal} if for any word ${w}$ of length $s$ with $j$ ones, $\freq[{w}]{\beta^s}$ exists and is equal to $p^j(1-p)^{s-j}$.
\end{definition}

In subsequent work, Postnikov \cite{Postnikov} considers the following alternative definition of admissible sequences. While this differs from Copeland's definition, we provide here a proof that the two notions coincide.

\begin{definition}[Postnikov admissible sequence]
Let $\alpha=a_1,a_2,\dots,a_n,\dots$ be a sequence in $\{0,1\}^{\omega}$. This sequence is called \emph{admissible} (in the sense of Postnikov) if for any word ${w}$ of length $m$ with $r_1,r_2,\dots,r_k$ the positions of its 1s, the sequence $\beta[{w}]= b_0,b_1,\dots,b_n,\dots$, defined by 
\[ b_n=a_{nm+r_1},a_{nm+r_2},\dots,a_{nm+r_k}, \]
satisfies that $\freq[1^k]{\beta[{w}]}$ exists and is equal to $p^k$.
\end{definition}

\begin{lemma}
A sequence $\alpha\in\{0,1\}^{\omega}$ is admissible in the sense of Copeland if and only if it is admissible in the sense of Postnikov.
\end{lemma}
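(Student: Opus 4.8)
The plan is to observe that both notions of admissibility are, for a fixed parameter, statements about the limiting empirical distribution of the non-overlapping length-$m$ blocks of $\alpha$, and that they are two equivalent encodings of the assertion that these blocks behave as $m$ independent Bernoulli($p$) trials. I would first fix $m$ and identify the common combinatorial object. For $r \in \{1,\dots,m\}$ the Copeland subsequence $(r/m)\alpha = a_r, a_{r+m}, a_{r+2m}, \dots$ has as its $\ell$-th term ($\ell \geq 0$) the symbol $a_{\ell m + r}$, so the $\ell$-th terms of $(1/m)\alpha, \dots, (m/m)\alpha$ are exactly the $\ell$-th length-$m$ block $a_{\ell m + 1}, \dots, a_{\ell m + m}$. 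On the Postnikov side, $\beta[\word{w}]$ for a word $\word{w}$ of length $m$ with $1$'s at positions $T = \{r_1, \dots, r_k\}$ reads off, in block $\ell$, precisely the entries at the positions in $T$; thus $\freq[1^k]{\beta[\word{w}]}$ is the limiting frequency of those $\ell$ for which the block carries a $1$ at every position of $T$. Writing $f_N(T)$ for the proportion among the first $N$ blocks that are $1$ at every position of $T$, and $g_N(S)$ for the proportion whose set of $1$-positions is exactly $S$, we have the finite identity $f_N(T) = \sum_{S \supseteq T} g_N(S)$.

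The bridge between the two definitions is the Möbius inversion of this identity on the Boolean lattice of subsets of $\{1,\dots,m\}$, namely $g_N(S) = \sum_{T \supseteq S} (-1)^{|T \setminus S|} f_N(T)$. Since this is a finite integer combination, convergence of all $f_N(T)$ is equivalent to convergence of all $g_N(S)$, and the two families of limits determine one another. A direct computation gives, for product-Bernoulli values, $\sum_{S \supseteq T} p^{|S|}(1-p)^{m-|S|} = p^{|T|}$ and, conversely, $\sum_{T \supseteq S} (-1)^{|T \setminus S|} p^{|T|} = p^{|S|}(1-p)^{m-|S|}$.

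With this in hand both implications are short. For Copeland $\Rightarrow$ Postnikov, condition (1) at modulus $m$ together with condition (2) says the $m$ residue sequences are independent with common $1$-frequency $p$, i.e. $\lim_N g_N(S) = p^{|S|}(1-p)^{m-|S|}$ for every exact pattern $S$; summing over $S \supseteq T$ yields $\lim_N f_N(T) = p^{|T|}$, which is exactly the Postnikov condition $\freq[1^{k}]{\beta[\word{w}]} = p^{k}$ with $k = |T|$. For Postnikov $\Rightarrow$ Copeland, apply the hypothesis at block length $m$: taking $|T| = 1$ gives $\freq[1]{(r/m)\alpha} = p$ for each $r$, and since a general $(r/n)\alpha$ agrees, after deleting finitely many initial terms, with some $(r'/n)\alpha$ for $r' \in \{1,\dots,n\}$, this recovers Copeland's condition (1) for all $r,n$; for all $T$ the hypothesis gives $\lim_N f_N(T) = p^{|T|}$, and Möbius inversion then yields $\lim_N g_N(S) = p^{|S|}(1-p)^{m-|S|}$, which is the required independence, Copeland's condition (2).

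The main obstacle, and the only point needing care, will be that the Postnikov condition supplies only the ``all-$1$-on-a-subset'' frequencies $f(T)$, not the individual joint-pattern frequencies $g(S)$, so one must argue that the exact-pattern limits exist at all before one can even speak of independence; this is precisely what the finite Möbius inversion delivers, each $g_N(S)$ being a fixed integer combination of the convergent quantities $f_N(T)$. A secondary point to state explicitly is the reading of Copeland's word \emph{independent} for the family $(1/m)\alpha, \dots, (m/m)\alpha$: I take it to mean that every joint pattern frequency factorizes as $\prod_{r} q(\epsilon_r)$ with $q(1) = p$ and $q(0) = 1-p$, which, via the two displayed summation identities, is equivalent to $f(T) = p^{|T|}$ for all $T$, so that the equivalence is insensitive to which of these formulations of independence one adopts.
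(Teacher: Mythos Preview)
Your argument is correct, and it takes a genuinely different route from the paper's proof.

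The paper works directly with Copeland's own pairwise formulation of independence (the footnote: two numbers are independent when $p(x\cdot y)=p(x)\cdot p(y)$). For Postnikov $\Rightarrow$ Copeland it simply instantiates $\word{w}$ as the word with a single $1$ at position $i$ to get condition~(1), and as the word with $1$'s at positions $i,j$ to get $\freq[11]{\beta[\word{u_{i,j}^n}]}=p^2=\freq[1]{(i/n)\alpha}\cdot\freq[1]{(j/n)\alpha}$, which is exactly Copeland's pairwise criterion. For Copeland $\Rightarrow$ Postnikov it invokes independence in the multiplicative form $\freq[1^k]{\beta[\word{w}]}=\prod_i \freq[1]{(r_i/n)\alpha}=p^k$ without further analysis.

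You instead read independence as full factorisation of the exact-pattern frequencies $g(S)$ and use M\"obius inversion over the Boolean lattice to pass between the families $\{g(S)\}$ and $\{f(T)\}$. This buys you two things the paper leaves implicit: first, you actually \emph{prove} that the exact-pattern limits $g(S)$ exist (as finite integer combinations of the convergent $f(T)$), rather than presupposing them in order to speak of independence; second, your closing remark shows that the equivalence is insensitive to whether one reads Copeland's ``independent'' as the pairwise/multiplicative-on-ones condition or as full joint factorisation, since the two summation identities make these equivalent once all $f(T)=p^{|T|}$. The paper's argument is shorter; yours is more robust to the ambiguity in Copeland's definition and makes the convergence issues explicit.
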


\begin{proof}
Let $\alpha$ be a Postnikov-admissible sequence. Let us define the word ${u_i^n}=00\dots 010\dots 0$, of length $n$ with a single 1 at position $1\leqslant i\leqslant n$. Then $\freq[1]{\beta}$ exists and is equal to $p$. Since $\beta[{u_i^n}]=(i/n)\alpha$, this proves $\alpha$ satisfies the first item in Copeland's definition. The second item, namely the independance of the sequence $(1/n)\alpha, (2/n)\alpha, \dots, (n/n)\alpha$, is obtained by considering words ${u_{i,j}^n}$, of length $n$ with 1s exactly at the positions $i$ and $j$. Indeed, we have $\freq[11]{\beta[{u_{i,j}^n}]}=p^2=\freq[1]{\beta[{u_i^n}]}\freq[1]{\beta[{u_j^n}]}$, which coincide with Copeland's formalisation of independence.

Conversely, let $\alpha$ be a sequence, ${w}$ a word of length $m$ and $r_1,r_2,\dots,r_k$ the positions of the 1s in ${w}$. If $\alpha$ is Copeland-admissible, $\freq[1^k]{\beta[{w}]}=\prod_{i=1}^{k}\freq[1]{\beta[{u_{r_i}^n}]}$ by the requirement of independence, and therefore
\[ \freq[1^k]{\beta[{w}]}=\prod_{i=1}^{k}\freq[1]{(r_i/n)\alpha}=p^k\]
using that $\beta[{u_{r_i}^n}]=(r_i/n)\alpha$ and the first property of Copeland-admissible sequences, namely that $\freq[1]{(i/n)\alpha}=p$ for all $1\leqslant i\leqslant n$.
\end{proof}

Postnikov's then shows how the two notions, i.e. Bernoulli-normal and admissibility, coincide. However, the proof of Postnikov's theorem is -- to the authors' knowledge -- not available in english. As this result is related to the proof of Agafonov's theorem, we expect to include a translation in a later version of this document.

\begin{theorem}[Postnikov \cite{Postnikov}]\label{thm:Postnikov}
A sequence $\alpha\in\{0,1\}^{\omega}$ is Bernoulli-normal if and only if it is admissible.
\end{theorem}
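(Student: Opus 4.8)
The plan is to recast both notions as statements about densities of patterns in $\alpha$, and then to travel between the \emph{overlapping} densities controlled by Bernoulli-normality and the \emph{aligned} densities demanded by admissibility. For offsets $0 = e_1 < e_2 < \cdots < e_k$ write $f(j)=1$ exactly when $a_{j+e_1} = \cdots = a_{j+e_k} = 1$, and $f(j)=0$ otherwise. Admissibility for a word $\word{w}$ of length $m$ with its $1$s at positions $r_1 < \cdots < r_k$ asserts $\freq[1^k]{\beta[\word{w}]} = p^k$, which (taking $e_i = r_i - r_1$) is precisely the statement that $f$ has density $p^k$ along the arithmetic progression of common difference $m$ and residue $r_1$; call this per-residue density $g_c$ when the residue is $c$. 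Bernoulli-normality, on the other hand, fixes the frequency of every genuine word $\word{v}$ at $p^{\#_1(\word{v})}(1-p)^{\#_0(\word{v})}$.

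The first ingredient is an inclusion--exclusion bridge: summing the Bernoulli values over all words $\word{v}$ carrying $1$s at a prescribed set of $k$ positions telescopes to $p^k$, because each unconstrained position contributes a factor $p + (1-p) = 1$. Hence Bernoulli-normality makes the overlapping density $\lim_N \frac1N\,\#\{j \le N : f(j)=1\}$ exist and equal $p^k$, being a finite sum of convergent word-frequencies; dually, the family of ``$1$s at a set of positions'' densities determines the density of any fully specified word.

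For the converse direction (admissible $\Rightarrow$ Bernoulli-normal) I would argue by tiling. Fix a word $\word{v}$ of length $\ell$ and a large block length $L$, and split $\alpha\vert_{\leq N}$ into consecutive blocks of length $L$. An occurrence of $\word{v}$ is either internal to a block or straddles a boundary; for each within-block starting position $i$, admissibility applied with block length $L$ plus the bridge above shows the aligned density of $\word{v}$ beginning at $i$ equals its Bernoulli value, uniformly in $i$, while straddling occurrences number at most $\ell-1$ per boundary and so contribute density at most $(\ell-1)/L$. Letting $N \to \infty$ and then $L \to \infty$ squeezes both $\liminf$ and $\limsup$ of the overlapping frequency onto the Bernoulli value.

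The forward direction (Bernoulli-normal $\Rightarrow$ admissible) is the main obstacle, since the overlapping density $p^k$ of $f$ is only the \emph{average} $\frac1m\sum_c g_c$ over residues, and a priori does not even force the $g_c$ to exist. The plan is to pin them down by a second-moment computation: count ordered pairs of positions $j \equiv j' \pmod{m}$ with $f(j)=f(j')=1$ in $[1,N]$ two ways. Grouping by residue gives $\frac{N^2}{m^2}\sum_c g_c^2 + o(N^2)$; grouping by the lag $t=(j'-j)/m$ and using that Bernoulli-normality forces the lag-$mt$ coincidence density to equal $p^{2k}=(p^k)^2$ for all but the finitely many $t$ with $mt \le e_k$ yields, after the triangular summation, $\frac{(p^k)^2 N^2}{m} + o(N^2)$. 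Comparing the two counts gives $\frac1m\sum_c g_c^2 = (p^k)^2$, and together with $\frac1m\sum_c g_c = p^k$ the equality case of Cauchy--Schwarz forces $g_c = p^k$ for every residue $c$, in particular for $c = r_1$. The delicate points are the negligibility of the finitely many overlapping-lag terms and of the boundary effects in the triangular sum, and the step where the second-moment identity upgrades the averaged statement into the per-residue one.
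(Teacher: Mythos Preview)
The paper does not actually supply a proof of this theorem: it cites Postnikov, remarks that the argument is only available in Russian, and promises a translation in a later version. The one structural hint it gives is in the lemma immediately following, whose proof says that in Postnikov's argument ``Agafonov-normality is used as an intermediate notion''. So the route the paper attributes to Postnikov is Bernoulli-normal $\Leftrightarrow$ block-normal $\Leftrightarrow$ admissible, and only the second equivalence is actually worked out in the text. Your inclusion--exclusion bridge together with the tiling argument for the direction admissible $\Rightarrow$ Bernoulli-normal is correct and is close in spirit to one half of that lemma.

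Your forward direction, however, has a genuine gap. Granting that the identity $\tfrac1m\sum_c g_c^2=(p^k)^2$ together with $\tfrac1m\sum_c g_c=p^k$ would force every $g_c$ to exist and equal $p^k$ (this part is fine, as there are only $m$ residues), the problem is obtaining the pair count $P(N)=\sum_c S_c(N)^2$ from the lag decomposition. For each \emph{fixed} $t$ with $mt>e_k$, Bernoulli-normality does give
\[
\frac{1}{N-mt}\sum_{j=1}^{N-mt} f(j)\,f(j+mt)\;\longrightarrow\;p^{2k},
\]
but this convergence carries no rate and is not uniform in $t$, while your triangular sum runs over roughly $N/m$ lags. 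An $o(N)$ error in each of $N/m$ terms need not add up to $o(N^2)$, so the claimed evaluation $P(N)=\tfrac{(p^k)^2}{m}N^2+o(N^2)$ is unjustified. This is not merely a ``boundary effect'': establishing that upper bound is essentially equivalent to showing that the indicator $f$ is equidistributed across residue classes modulo $m$, which is exactly the aligned statement you are trying to prove. The way out the paper points to is to pass through block-normality, where a concentration argument on long overlapping windows (most length-$Lm$ words carry close to $L\mu_p(w)$ aligned copies of $w$) replaces the bare second-moment sum over lags.
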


\subsection{Postnikova}

A few years later, a short and beautiful paper by Postnikova characterises Bernoulli-normal sequences as the sequences for which the distribution of 1s is preserved by selecting strategies depending only on a finite number of preceding bits. In fact, Postnikova's result is the first to introduce finiteness and widely opens the way to Agafonov's theorem. It is stated as a new, restricted, notion of kollektiv. 

\begin{definition}[Postnikova-kollektiv]
Let $\alpha=a_1,a_2,\dots,a_n,\dots\in\{0,1\}^{\omega}$ be a sequence. The sequence $\alpha$ will be called a \emph{kollektiv} (in the sense of Postnikova) if:
\begin{enumerate}
\item $\freq{\alpha}$ exists and is equal to $p$;
\item for all word ${w}$ of length $s$, ${w}$ occurs in $\alpha$ an infinite number of times, and if a subsequence $\beta$ is made up consisting of the values immediately following the appearance of ${w}$ then $\freq{\beta}$ exists and is equal to $p$.
\end{enumerate}
\end{definition}

Note that using our own definition of kollektiv w.r.t. sets of strategies, a Postnikova-kollektiv is a kollektiv w.r.t. the set of strategies defined by a single finite word used as postfix, i.e. strategies $S_{{w}}$ defined as 
\[\{{v}\in \{0,1\}^* \mid \exists {u}, {v}={u}\cdot{w} \}.\]

\begin{theorem}[Postnikova]
A sequence $\alpha\in\{0,1\}^{\omega}$ is Bernoulli-normal if and only if it is a Postnikova-kollektiv.
\end{theorem}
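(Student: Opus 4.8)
The plan is to treat both notions as assertions about the limiting factor-frequencies $\freq[\word{w}]{\alpha}$ and to pass between them through a single elementary combinatorial bridge. Unwinding the caterpillar, the frequency of the window $\word{w}$ in $\beta^s$ is nothing but the factor frequency $\freq[\word{w}]{\alpha}$ of Definition~\ref{def:pdistrib}, so Bernoulli-normality says precisely that $\freq[\word{w}]{\alpha} = p^{j}(1-p)^{s-j}$ for every word $\word{w}$ of length $s$ with $j$ ones. I would assume throughout that $0 < p < 1$; the extremes $p \in \{0,1\}$ are genuinely degenerate, since a Bernoulli-normal sequence need not then contain every factor infinitely often, and the statement must carry that caveat.

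First I would record two bookkeeping facts for an arbitrary word $\word{w}$ of length $s$, writing $O_N = \countones[\word{w}]{\alpha\vert_{\leq N}}$. Every occurrence of $\word{w}$ inside $\alpha\vert_{\leq N}$ is immediately followed by a further bit of $\alpha\vert_{\leq N}$, save possibly the occurrence abutting position $N$; hence
$$
\left| O_N - \countones[\word{w}0]{\alpha\vert_{\leq N}} - \countones[\word{w}1]{\alpha\vert_{\leq N}} \right| \leq 1,
$$
which upon division by $N$ gives the splitting relation $\freq[\word{w}]{\alpha} = \freq[\word{w}0]{\alpha} + \freq[\word{w}1]{\alpha}$ whenever the right-hand limits exist. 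Next, let $\beta_{\word{w}}$ be the subsequence of bits immediately following the occurrences of $\word{w}$, with $j_1 < j_2 < \cdots$ the starting positions of those occurrences. The $M$-th entry of $\beta_{\word{w}}$ is $a_{j_M+s}$, so the number of ones among the first $M$ entries of $\beta_{\word{w}}$ equals $\countones[\word{w}1]{\alpha\vert_{\leq j_M+s}}$ exactly, while $O_{j_M+s}$ equals $M$ up to a $\pm 1$ boundary term. Consequently, whenever $\word{w}$ occurs infinitely often (so $O_N \to \infty$), the frequency $\freq{\beta_{\word{w}}}$ exists and equals $q$ if and only if $\countones[\word{w}1]{\alpha\vert_{\leq N}}/O_N \to q$. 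This is the ratio relation on which everything turns.

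For the forward implication I would assume $\alpha$ Bernoulli-normal. Taking $s=1$ yields $\freq{\alpha} = p$, the first clause of a Postnikova-kollektiv. Because $0<p<1$, each $\freq[\word{w}]{\alpha}=p^{j}(1-p)^{s-j}$ is strictly positive, so $O_N \to \infty$ and $\word{w}$ occurs infinitely often; then
$$
\frac{\countones[\word{w}1]{\alpha\vert_{\leq N}}}{O_N} = \frac{\countones[\word{w}1]{\alpha\vert_{\leq N}}/N}{O_N/N} \longrightarrow \frac{\freq[\word{w}1]{\alpha}}{\freq[\word{w}]{\alpha}} = \frac{p^{j+1}(1-p)^{s-j}}{p^{j}(1-p)^{s-j}} = p,
$$
and the ratio relation converts this into $\freq{\beta_{\word{w}}}=p$, the second clause. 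Conversely, assuming $\alpha$ is a Postnikova-kollektiv, I would prove Bernoulli-normality by induction on $s=\len{\word{w}}$. The base case $s=1$ is clause (1) together with $\freq[0]{\alpha}=1-\freq{\alpha}=1-p$, forced by $\countones[0]{\alpha\vert_{\leq N}}+\countones[1]{\alpha\vert_{\leq N}}=N$. For the step, given the claim for all length-$s$ words and a word $\word{w}$ of length $s$ with $j$ ones, clause (2) provides $\countones[\word{w}1]{\alpha\vert_{\leq N}}/O_N \to p$ through the ratio relation; multiplying by $O_N/N \to \freq[\word{w}]{\alpha}=p^{j}(1-p)^{s-j}$ gives $\freq[\word{w}1]{\alpha}=p^{j+1}(1-p)^{s-j}$, whereupon the splitting relation yields $\freq[\word{w}0]{\alpha}=p^{j}(1-p)^{s-j+1}$. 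As $\word{w}0$ and $\word{w}1$ exhaust the words of length $s+1$, the induction closes.

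The hard part is not conceptual but lies in pinning down the boundary term in the ratio relation. The running index $M$ along $\beta_{\word{w}}$ and the occurrence count $O_N$ along $\alpha\vert_{\leq N}$ are aligned only up to an occurrence of $\word{w}$ sitting at the right end of the prefix, and one must verify that $M$ sweeps through all sufficiently large integers (it changes by $0$ or $1$ as $N$ grows and tends to infinity) so that convergence of $\countones[\word{w}1]{\alpha\vert_{\leq N}}/O_N$ in the variable $N$ is truly equivalent to convergence of the frequency of $\beta_{\word{w}}$ in the variable $M$. Once this alignment is secured, each remaining step is a quotient of convergent sequences whose denominators are bounded away from zero precisely by the inductive positivity $\freq[\word{w}]{\alpha} > 0$, so no further analytic input is needed.
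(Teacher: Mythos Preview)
The paper does not actually prove this theorem; it simply refers the reader to the English translation of Postnikova's original article. Your proposal therefore supplies what the paper omits, and it does so correctly.

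The argument is sound and entirely elementary: the forward direction reduces to recognising $\freq{\beta_{\word{w}}}$ as the ratio $\freq[\word{w}1]{\alpha}/\freq[\word{w}]{\alpha}$, and the converse is a clean induction on word length via the splitting identity. Your discussion of the alignment between the index $M$ along $\beta_{\word{w}}$ and the occurrence count $O_N$ along $\alpha$ correctly isolates the only place where genuine care is required. One cosmetic remark: you phrase the splitting relation as valid ``whenever the right-hand limits exist,'' but in the inductive step you actually invoke it in the form ``if $\freq[\word{w}]{\alpha}$ and $\freq[\word{w}1]{\alpha}$ exist then so does $\freq[\word{w}0]{\alpha}$.'' This follows equally from the counting inequality you wrote down, so there is no gap, only a slight mismatch between the statement and its use.
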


The proof of this theorem can be found in the english translation \cite{Postnikova61} of Postnikova's paper \cite{Postnikova}. Note the error in translation in the definition of Postnikov-admissible sequences: the translator mentions \enquote{the relative frequency of appearances of ones in the sequence (2)}, while it should read \emph{the relative frequency of appearances of the word $1^k$ in the sequence (2)}. The confusion comes from the original russian formulation (which can be traced back to Postnikov's work \cite{Postnikov}) which is already ambiguous.

\subsection{Agafonov}

Agafonov's contribution was to relate this to the notion of automata. The main theorem of his original russian paper \cite{Agafonov} is stated as follows.

\begin{theorem}[Agafonov \cite{Agafonov}]\label{Agaftheorem}
A sequence $\alpha$ is normal if and only if it is a kollektiv w.r.t. the set of strategies computable by finite automata, i.e. it satisfies:
\begin{enumerate}
\item $\freq{\alpha}$ exists and is equal to $p$;
\item for all automata $M$, the subsequence $\beta$ consisting of the values immediately following the words accepted by $M$ is such that $\freq{\beta}$ exists and is equal to $p$.
\end{enumerate}
\end{theorem}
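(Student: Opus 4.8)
The statement is an equivalence whose reverse implication is short and whose forward implication carries essentially all of the work. For the reverse direction, \textbf{kollektiv $\Rightarrow$ normal}, I would observe that postfix selection is a special case of finite-automaton selection: for each finite word $\word{w}$ the strategy $S_{\word{w}}$ retaining the symbols immediately following occurrences of $\word{w}$ is computed by the finite automaton accepting exactly the words ending in $\word{w}$. Hence a sequence that is a kollektiv with respect to every finite automaton is, a fortiori, a Postnikova-kollektiv, and Postnikova's theorem together with the identification of Bernoulli-normality with $p$-distribution (\Cref{thm:Postnikov}) immediately yields that $\alpha$ is normal; no further argument is needed.

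For the forward direction, \textbf{normal $\Rightarrow$ kollektiv}, assume $\alpha$ is normal, so that condition (1) is immediate, and fix an automaton $M$ realising some strategy; I must show that the selected subsequence $\beta$ satisfies $\freq{\beta}=p$. First I would invoke \Cref{lem:simply_normal_is_enough} to reduce to the case where $M$ is connected and we count only the symbol $1$, i.e. to showing that the bits selected by $M$ contain $1$ with limiting frequency $p$. The guiding idea is then to decompose the selected positions according to the state that $M$ occupies when it reads them: for each state $q$ the prefixes driving $M$ into $q$ form a regular language $L_q$, and $\beta$ collects the symbols following those prefixes whose driving state is accepting. To bring normality to bear I would approximate membership in $L_q$ by its truncation to a fixed window length $N$. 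Conditioning on the length-$N$ word $\word{w}$ preceding a position is a postfix condition, to which the forward direction of Postnikova's theorem applies, giving that the following symbol is $1$ with conditional frequency $p$; summing these contributions over states and over length-$N$ windows would yield frequency $p$ among the selected symbols, up to an error from positions whose length-$N$ truncation misclassifies the current state.

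The main obstacle is exactly this error term, namely the correlation between the \emph{distant} history of $\alpha$, which fixes the state of $M$ at the start of the window, and the \emph{nearby} bits consisting of the window and the selected symbol. A single state of $M$ corresponds to an infinite regular set of words, so the per-word control furnished by normality does not apply to it directly, and one must show that the influence of the distant history on the selection frequency is asymptotically negligible. This is precisely where \Cref{main:lemma2} enters: a Chebyshev-type concentration estimate bounds the fraction of positions misclassified by the length-$N$ truncation, and hence the deviation of the selected frequency from $p$, by a quantity that can be made arbitrarily small by taking $N$ large. Letting $N \to \infty$ then closes the argument.
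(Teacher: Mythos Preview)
Your treatment of the reverse implication is correct and matches the paper: finite-automaton kollektiv $\Rightarrow$ Postnikova-kollektiv $\Rightarrow$ Bernoulli-normal via Postnikova's theorem.

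For the forward implication, however, your route diverges from the paper's and the appeal to \Cref{main:lemma2} does not do what you claim. The paper does \emph{not} approximate the current state of $M$ by a sliding length-$N$ window. Instead it partitions $\alpha$ into consecutive length-$n$ blocks $\alpha_{(n,r)}$ and introduces the set $D_n^p(b,\epsilon)$ of length-$n$ words $\word{w}$ that are good \emph{for every possible starting state simultaneously}: for all $q\in Q$, $\vert A_q[\word{w}]\vert>bn$ and the $1$-frequency of $A_q[\word{w}]$ is within $\epsilon$ of $p$. This universal quantifier over $q$ is precisely the device that neutralises the dependence on distant history: whatever state $M$ occupies when it enters block $r$, if $\alpha_{(n,r)}\in D_n^p(b,\epsilon)$ then that block contributes correctly. \Cref{main:lemma1} and \Cref{main:lemma2} combine to give $\mu_p(D_n^p(b,\epsilon))\to 1$, and block-normality of $\alpha$ then guarantees that all but a vanishing fraction of the actual blocks lie in $D_n^p(b,\epsilon)$; a short $\ell/L$ estimate finishes.

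Your sketch instead conditions on the length-$N$ suffix and proposes to control an error term for positions where the window ``misclassifies'' the state. But for a general (even strongly connected) automaton the state after reading a prefix is \emph{never} determined by its last $N$ symbols, so this misclassification rate need not tend to $0$ as $N\to\infty$. More importantly, \Cref{main:lemma2} is a concentration bound on the $1$-frequency among bits selected by a fixed strategy from a $\mu_p$-random word of length $n$; it says nothing about how well a window predicts the automaton's state, and cannot be read as bounding the quantity you need. As written, the forward argument does not close; the missing idea is the ``uniform in $q$'' block criterion rather than any window-based state approximation.
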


In fact, the proof of the implication from right to left in Agafonov's theorem is a consequence of Postnikova's theorem. Agafonov only refers to her work for this part of the proof. Indeed, if a sequence is a kollektiv in the sense of this theorem, it is also a Postnikova-kollektiv. Agafonov's contribution is therefore the proof of the converse implication, namely: if a sequence $\alpha$ is normal, it is a kollektiv w.r.t. the set of strategies computable by finite automata. 

However, the notion of normality used by Agafonov is not the notion of $p$-distributed sequence (\Cref{def:pdistrib}), or equivalently of Bernoulli-normal sequence (\Cref{def:Bernoullinormal}). Agafonov uses instead a notion of \emph{normality by blocks}.

\begin{definition}[Agafonov normal]\label{def:Agafonovnormal}
Let $\alpha\in\{0,1\}^{\omega}$ be a sequence $a_1,a_2,\dots,a_n,\dots$ and consider for every integer $s>0$ the $s$-th \emph{block sequence} of x: 
\[\beta^s=(a_1,\dots,a_{s-1}),(a_s,\dots,a_{2s-1}),\dots,(a_{ks},\dots,a_{(k+1)s-1}),\dots.\]
The sequence $\alpha$ is \emph{Agafonov-normal} if for any word ${w}$ of length $s$ with $j$ ones, $\freq[{w}]{\beta^s}$ exists and is equal to $p^j(1-p)^{s-j}$.
\end{definition}

This definition can be shown to be equivalent to Postnikov-admissibility which, combined with Postnikov's theorem (\Cref{thm:Postnikov}), proves the notion coincides with the usual notion of normality. 

\begin{lemma}
A sequence $\alpha$ is Agafonov-normal if and only if it is Postnikov-admissible.
\end{lemma}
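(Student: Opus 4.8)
The plan is to reformulate both conditions as statements about the empirical distribution of the non-overlapping length-$s$ blocks of $\alpha$, and then to observe that the two conditions probe this distribution against the same Bernoulli product measure through two different---but linearly interchangeable---families of test sets. Fix a block length $s \geq 1$, write $\word{v} \in \{0,1\}^s$ for a generic block, and let $\mu_N(\word{v})$ denote the fraction of the first $N$ blocks of the block sequence $\beta^s$ that equal $\word{v}$; with the paper's conventions, $\freq[\word{v}]{\beta^s} = \lim_N \mu_N(\word{v})$. For a set of coordinates $S \subseteq \{1,\dots,s\}$, let $C_S = \{\word{v} \in \{0,1\}^s : v_i = 1 \text{ for all } i \in S\}$ denote the \enquote{up-set} of blocks carrying ones throughout $S$. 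With this notation, \Cref{def:Agafonovnormal} asserts that $\mu_N(\word{v}) \to p^{j}(1-p)^{s-j}$ for every $\word{v}$, where $j$ is the number of ones in $\word{v}$, whereas Postnikov-admissibility, applied to any word $\word{w}$ of length $s$ whose ones sit exactly at the positions $S = \{r_1,\dots,r_k\}$, asserts precisely that $\mu_N(C_S) \to p^{|S|}$, since $b_n = 1^k$ holds if and only if the $n$-th block lies in $C_S$. (Note that $\beta[\word{w}]$ depends on $\word{w}$ only through $S$, and that $S = \emptyset$ is vacuous.) Both conditions thus say that $\mu_N$ converges to the Bernoulli($p$) product measure, tested respectively on singletons and on the up-sets $C_S$; as the product measure assigns mass $p^{|S|}$ to $C_S$, the target values already agree, and only the conversion between the two families of limits remains.

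For the forward implication (Agafonov-normal $\Rightarrow$ Postnikov-admissible) I would note that each $C_S$ is a finite disjoint union of singletons, so $\mu_N(C_S) = \sum_{\word{v} \in C_S} \mu_N(\word{v})$ is a finite sum of convergent sequences and hence converges; factoring the limit over the unconstrained coordinates in the complement of $S$ gives $p^{|S|}\prod_{i \notin S}(p + (1-p)) = p^{|S|}$, as required.

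The backward implication (Postnikov-admissible $\Rightarrow$ Agafonov-normal) is the inverse change of basis and carries the only real content. Writing $T$ for the set of positions of the ones in $\word{v}$, the singleton $\{\word{v}\}$ is the set of blocks with ones exactly on $T$, that is, $C_T$ with every coordinate outside $T$ forced to $0$; inclusion--exclusion over those coordinates yields the finite signed identity $\mu_N(\word{v}) = \sum_{U \subseteq T^c} (-1)^{|U|}\,\mu_N(C_{T \cup U})$, where $T^c = \{1,\dots,s\}\setminus T$. Every term on the right converges by hypothesis, so $\mu_N(\word{v})$ converges, and its limit is $\sum_{U \subseteq T^c}(-1)^{|U|} p^{|T| + |U|} = p^{|T|}(1-p)^{|T^c|} = p^{j}(1-p)^{s-j}$, using the binomial identity $\sum_{U \subseteq T^c}(-1)^{|U|}p^{|U|} = (1-p)^{|T^c|}$. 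This is exactly the value demanded by \Cref{def:Agafonovnormal}.

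Since everything is uniform in $s$, quantifying both conditions over all block lengths yields the stated equivalence. I expect no genuine obstacle beyond bookkeeping: the singleton-frequency vector and the up-set-frequency vector are related by the invertible zeta/M\"obius transform on the Boolean lattice of coordinate sets, a transform independent of $N$, so convergence of one family is equivalent to convergence of the other with matching limits. The only points needing care are that the relevant limits exist---which holds because finite signed sums of convergent sequences converge---and a minor indexing clean-up so that \Cref{def:Agafonovnormal} and the Postnikov definition are read as partitions of $\alpha$ into the \emph{same} consecutive length-$s$ blocks.
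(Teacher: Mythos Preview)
Your proof is correct and follows essentially the same route as the paper. For Agafonov-normal $\Rightarrow$ Postnikov-admissible, both you and the paper sum the singleton frequencies over the up-set and collapse the free coordinates via $(p+q)^{s-j}=1$; for the converse, the paper argues by induction on the number $\beta$ of positions fixed to $0$ (obtaining limit $p^{\alpha}q^{\beta}$ for a set with $\alpha$ fixed ones and $\beta$ fixed zeros), which is precisely your inclusion--exclusion/M\"obius identity unrolled one coordinate at a time.
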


\begin{proof}
In fact, the proof of this appears in the proof of Postnikov theorem, as Agafonov-normality is used as an intermediate notion. The proof of the right-to-left implication is taken from Postnikov's proof \cite{Postnikov}. The key observation is that the quantity $\freq[1^k]{\beta[{w}]}$ that appears in Postnikov-admissibility corresponds to the frequency of appearance of words in $\Delta$ in the sequence of blocks defined from $\alpha$, where $\Delta$ is the set of words ${u}$ of length $k$ that have 1s at these positions in which ${w}$ has 1s (but which may differ from ${w}$ on other bits). 

We first show that a Postnikov-admissible sequence $\alpha$ is Agafonov-normal. Let $\Sigma$ be the set of all length k word with fixed $\alpha$ bits equal to $1$ and $\beta$ bits equal to $0$, $\alpha+\beta\leq k$.
Write $T_l(\Sigma)$ the number of occurrences of $\Sigma$ in the sequence of blocks. Then by induction on $\beta$, using the definition of admissibility, we obtain:
\begin{equation} \lim_{l\rightarrow\infty} \frac{T_l(\Sigma)}{l}=p^\alpha q^{\beta}. \label{eq6Postnikov}\end{equation}
This gives the result by fixing $\Sigma$ as a singleton, i.e. $\alpha+\beta=k$.

Conversely, consider given an Agafonov-normal sequence. By definition, we know that the frequency of a word $w$ (with $j$ bits equal to $1$) is equal to $p^j q^{k-j}$. We want to sum this frequency over all words that have 1s at the same positions as $w$ but in which some 0s may have become 1s. I.e. we have all combinations of putting 1s in $k-j$ boxes. So the sum can be written as:
\[ \freq[1^k]{\beta[{w}]}=p^j \left( \sum_{k-j} \binom{i}{k-j} p^{i}q^{k-j-i}\right)= p^j (p+q)^{k-j} = p^j. \qedhere\]
\end{proof}

\subsection{The modern understanding of Agafonov's theorem}

It is important to note here that the original statement of Agafonov's Theorem \ref{Agaftheorem} differs widely from what is nowadays understood and referred as \emph{Agafonov's theorem}. Indeed, the name now refers to the following statement, which can be derived from Agafonov's theorem modulo \emph{compositionality of automata} (\autoref{prop:fin_sel_comp}).

\begin{theorem}[Agafonov's theorem, modern understanding]
Let $\alpha$ be a normal sequence. Any infinite subsequence selected by a finite automata is again normal.
\end{theorem}

\begin{proof}
Let $\alpha$ be a normal sequence, and $M$ an automata selecting an infinite subsequence $\beta$. By Agafonov's theorem \ref{Agaftheorem}, the frequency of $1$s in $\beta$ is defined and equal to $p$. By \autoref{prop:fin_sel_comp}, for any automata $N$ there exists an automata $N\circ M$ such that $N\circ M[\alpha]=N[\beta]$. So for any automata $N$ such that the selected subsequence $N[\beta]$ is infinite, this subsequence is also a subsequence selected from $\alpha$, hence Agafonov's Theorem \ref{Agaftheorem} implies that the frequency of $1$s in $N[\beta]$ is defined and equal to $1$.

This just proves that $\beta$ is an Agafonov kollectiv. A last application of Agafonov's theorem \ref{Agaftheorem} then implies that $\beta$ is normal, proving the theorem.
\end{proof}


\section{Agafonov's original proof: a direct translation}

We fix once and for all the alphabet $\Sigma=\{0,1\}$.

\begin{definition}
	Let ${a}=a_{1}a_{2}\dots a_{n}$ be a word over $\Sigma$. We define: 
	\[ \mu_{p}({a})=p^{\countones{a}}(1-p)^{n-\countones{a}} \]
\end{definition}

\begin{definition}
For $M\subseteq\{0,1\}^{N}$, de define $\mu_{p}(M)=\sum_{{w}\in M}\mu_{p}({w})$.
\end{definition}

\begin{definition}
Let $\alpha=a_1,a_2,\dots, a_n,\dots$ be a sequence in $\{0,1\}^\omega$. For all natural number $n$ we define the $n$-block decomposition of $\alpha$ as the sequence $(\alpha_{(n,r)})_{r\geqslant 1}$ defined by
\[ \alpha_{(n,r)}=a_{n(r-1)+1}a_{n(r-1)+2}\dots x_{nr} \]
\end{definition}

\begin{definition}
	Let $\alpha$ be a sequence in $\{0,1\}^\omega$, ${w}$ a finite word of length $n$, and $k$ an integer. We define $\fqcy[\alpha]{{w}}{k}=\frac{1}{k}\card\{\alpha_{(n,r)}={w}~|~r\leqslant k\}$.
\end{definition}

Notice that a sequence $\alpha$ is Agafonov-normal (\Cref{def:Agafonovnormal}) if and only if for all finite word ${w}$ of length $n$ with $j$ bits equal to $1$, $\lim_{k\rightarrow\infty} \fqcy[\alpha]{{w}}{k}$ exists and is equal to $p^j q^{n-j}$.

\begin{definition}
Let $A$ be a strongly connected automata with set of states $Q$. For all $q\in Q$, we write $A_{q}$ the automata $A$ in which the state $q$ is chosen as initial.
\end{definition}

\begin{definition}
Let $A$ be a strongly connected automata with set of states $Q$, and $q\in Q$. Let ${w}=w_1w_2\dots w_n$ be a finite word. We write $\pickedout{{w}}{A_{q}}$ the word \emph{picked out} by the automata $A_{q}$, i.e. the word $w_{i_1}w_{i_2}\dots w_{i_k}$ where $i_1<i_2<\dots<i_k$ is the increasing sequence of indices $1\leqslant j\leqslant n$ such that ${w}\vert_{\leq j-1}$ is accepted by $A_{q}$.
\end{definition}

\begin{definition}
Let $A$ be a strongly connected automata with set of states $Q$. For all $p\in [0,1]$, $b\in [0,1]$, $n\in\naturalN$ and $\epsilon>0$, we define the sets:
\[ D_{n}^{p}(b,\epsilon)=\{{w}\in\{0,1\}^{n}~|~ \forall q\in Q, \len{\pickedout{{w}}{A_{q}}}>bn, \abs{\frac{\countones{\pickedout{x}{A_{q}}}}{\len{\pickedout{x}{A_{q}}}}-p}<\epsilon\} \]
\end{definition}

\begin{claim}\label{mainclaim}
For all $\epsilon>0$ and all $p\in[0,1]$, $\lim_{n\rightarrow\infty} \mu_{p}(D_{n}^{p}(b,\epsilon)) = 1$.
\end{claim}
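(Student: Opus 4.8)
The plan is to read $\mu_p$ as the $p$-Bernoulli probability measure on $\{0,1\}^n$, so that a word $\word{w}$ is just a string of $n$ independent Bernoulli$(p)$ bits and $\mu_{p}(D_n^p(b,\epsilon))$ is the probability that such a random word lies in $D_n^p(b,\epsilon)$. Since $Q$ is finite, it suffices by a union bound to show that, for each fixed $q \in Q$, the probability that either defining condition of $D_n^p(b,\epsilon)$ fails for $A_q$ tends to $0$; summing the finitely many bounds then forces the $\mu_p$-measure of the complement to $0$, which is the claim. Concretely, I would write the bad event as contained in $\{\len{\pickedout{\word{w}}{A_q}}\leq bn\}$ together with $\{\,\abs{\#_1/\len-p}\geq\epsilon,\ \len>bn\,\}$, so that the two conditions can be estimated independently.

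The key structural observation I would isolate first is a \emph{selection invariance}: whether position $j$ is picked out by $A_q$ depends only on $\word{w}\vert_{\leq j-1}$, i.e. on the bits strictly before $w_j$, and never on $w_j$ itself. Consequently, if $T\subseteq\{1,\dots,n\}$ denotes the (random) set of selected positions, then conditionally on $T$ the selected bits $(w_i)_{i\in T}$ are again i.i.d. Bernoulli$(p)$. This is the rigorous form of von Mises' intuition and is exactly what licenses the direct probabilistic estimate announced in the introduction; I expect establishing it cleanly — e.g. by conditioning successively on $w_1,\dots,w_{j-1}$ and noting that the selection indicator for position $j$ is measurable with respect to them — to be the conceptual heart of the argument.

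With invariance in hand, the two conditions are treated separately. For the frequency condition I would condition on $T$ with $\abs{T}=k$: the count $\countones{\pickedout{\word{w}}{A_q}}$ is then Binomial$(k,p)$, so Chebyshev's inequality gives $\Pr(\abs{\#_1/k-p}\geq\epsilon\mid T)\leq p(1-p)/(k\epsilon^2)$; on the event $k>bn$ this is at most $p(1-p)/(bn\epsilon^2)$, which vanishes as $n\to\infty$. For the length condition I would use strong connectivity: under Bernoulli$(p)$ input with $0<p<1$ the sequence of states visited by $A_q$ is an irreducible Markov chain on $Q$, so the long-run frequency of accepting states converges (independently of the start $q$) to the stationary weight $\pi(F)>0$ of the accepting set $F$; choosing $b<\pi(F)$ (as provided by the ambient setup) makes $\Pr(\len{\pickedout{\word{w}}{A_q}}\leq bn)\to 0$ by the ergodic theorem for finite Markov chains.

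Combining, the complement of $D_n^p(b,\epsilon)$ has $\mu_p$-measure bounded by $\sum_{q\in Q}\bigl(\Pr(\len{\pickedout{\word{w}}{A_q}}\leq bn)+p(1-p)/(bn\epsilon^2)\bigr)$, which tends to $0$. The main obstacle is the selection-invariance step: everything else is a routine Chebyshev estimate or standard Markov-chain concentration, but the whole argument collapses unless one is careful that selection of a position is decided strictly before its bit is revealed, so that conditioning on the selection pattern does not bias the selected values. A secondary point requiring care is the degenerate boundary $p\in\{0,1\}$, where $\mu_p$ concentrates on $0^n$ or $1^n$ and the induced chain is no longer irreducible; there the frequency condition holds automatically and only the length condition, governed by the relevant deterministic cycle, needs separate checking.
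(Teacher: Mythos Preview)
Your decomposition and the treatment of the length condition coincide with the paper's: the paper writes $\{0,1\}^n \setminus D_n^p(b,\epsilon) = E_n(b) \cup G_n(b,\epsilon)$ and handles $E_n$ via the ergodic theorem for the irreducible Markov chain on $Q$ induced by Bernoulli input (\Cref{lemma1}). The union bound over $q\in Q$ is also exactly how the paper proceeds.

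The gap is in the frequency part. Your selection-invariance claim, as stated, is false: it is \emph{not} true that conditionally on the set $T$ of selected positions the bits $(w_i)_{i\in T}$ are i.i.d.\ Bernoulli$(p)$. Whether position $j+1$ is selected depends on $w_j$, and $w_j$ may itself be a selected bit; hence conditioning on $T$ (or on $\abs{T}$) can bias the earlier selected values. For a concrete counterexample, take the strongly connected two-state automaton with $q_0$ accepting, $\delta(q_0,0)=q_0$, $\delta(q_0,1)=q_1$, $\delta(q_1,\cdot)=q_0$, start state $q_0$, and $n=3$: the event $T=\{1,2,3\}$ forces $w_1=w_2=0$, so the first two selected bits are deterministically $0$. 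Consequently your Chebyshev step ``conditionally on $\abs{T}=k$ the count is Binomial$(k,p)$'' does not go through.

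What \emph{is} true is weaker: for each fixed word $\word{u}$, one has $\mu_p\{\word{w}\in\{0,1\}^n:\word{u}\preceq A_q[\word{w}]\}\le\mu_p(\word{u})$. The paper proves this by an inductive bit-flipping argument (\Cref{lemma3}), then uses a prefix-free reduction of the set of ``bad'' selected words to transfer this to a strong-law bound on the selected sequence (\Cref{lemma2}). Equivalently, one may argue that if the input is extended to an infinite Bernoulli sequence then the selected bits $Y_1,Y_2,\ldots$ are i.i.d.\ Bernoulli$(p)$ \emph{unconditionally}, and bound $G_n(b,\epsilon,q)$ by the event $\{\exists k>bn:\abs{k^{-1}\sum_{i\le k}Y_i - p}\ge\epsilon\}$; but controlling this needs a maximal inequality (or the strong law directly), not a single Chebyshev bound after conditioning on $\abs{T}=k$. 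Your identification of ``selection decided strictly before the bit is revealed'' as the crux is right; it is only the passage from that observation to a conditional Binomial law that fails.
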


\begin{proof}
This claim is a consequence of \Cref{lemma1} and \Cref{lemma2} below, noting that $D_{n}^{p}(b,\epsilon)=\Sigma^n\diagdown(E_n(b)\cup G_n(b,\epsilon))$.
\end{proof}


\begin{theorem}
Let $\alpha$ be a normal sequence with ratio $p\in[0,1]$, $A$ a strongly connected automata. Then the sequence $\beta=\pickedout{\alpha}{A}$ is normal with ratio $p$.
\end{theorem}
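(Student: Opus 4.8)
The plan is to reduce to simple normality and then feed \Cref{mainclaim} into the $n$-block decomposition of $\alpha$. By \Cref{lem:simply_normal_is_enough} it suffices to treat the case $n=1$: that every strongly connected automaton preserves the limiting frequency $p$ of the symbol $1$. I therefore aim to show $\freq[1]{\beta}=p$, the full statement following by applying that lemma (equivalently, by running the same argument on the auxiliary automata it supplies). Fix a target precision $\eta>0$, put $\epsilon=\eta/2$, and use \Cref{mainclaim} to pick $n$ so large that $\mu_{p}(D_{n}^{p}(b,\epsilon))>1-\delta$, where $b>0$ is the constant of the claim and $\delta>0$ will be fixed small at the very end.

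The heart of the argument is to process $\alpha$ through its $n$-block decomposition $(\alpha_{(n,r)})_{r\geqslant 1}$. As $A$ scans $\alpha$ from left to right it enters block $r$ in some state $q_{r}\in Q$, and since whether a position is picked depends only on the state reached before it, the contribution of block $r$ to $\beta$ is exactly $\pickedout{\alpha_{(n,r)}}{A_{q_{r}}}$; thus $\beta$ is the concatenation of these per-block contributions. The universal quantifier over $q\in Q$ in the definition of $D_{n}^{p}(b,\epsilon)$ is what makes this usable: we have no control over the entering states $q_{r}$, but whenever $\alpha_{(n,r)}\in D_{n}^{p}(b,\epsilon)$ --- call such a block \emph{good} --- that block contributes more than $bn$ symbols to $\beta$ and those symbols carry a $1$-frequency within $\epsilon$ of $p$, \emph{whatever} $q_{r}$ happens to be. Because $\alpha$ is normal and $D_{n}^{p}(b,\epsilon)$ is a finite subset of $\{0,1\}^{n}$, summing the block frequencies $\fqcy[\alpha]{\word{w}}{k}$ over its elements gives $\lim_{k\to\infty}\tfrac{1}{k}\card\{r\leqslant k : \alpha_{(n,r)}\in D_{n}^{p}(b,\epsilon)\}=\mu_{p}(D_{n}^{p}(b,\epsilon))>1-\delta$, so asymptotically at least a $(1-\delta)$-fraction of the first $k$ blocks are good.

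I would then aggregate over the first $k$ blocks. Write $L_{g},L_{b}$ for the numbers of symbols contributed to $\beta$ by good and bad blocks, and $N_{g},N_{b}$ for the corresponding counts of $1$'s, with $L=L_{g}+L_{b}$ and $N=N_{g}+N_{b}$. Good blocks give $\abs{N_{g}-pL_{g}}\leqslant\epsilon L_{g}$, while bad blocks are uncontrolled but trivially satisfy $0\leqslant N_{b}\leqslant L_{b}$. The decisive point is that bad blocks are negligible in \emph{length}, not merely in count: each block contributes at most $n$ symbols, so asymptotically $L_{b}\leqslant\delta k n$, whereas each good block contributes more than $bn$, so $L_{g}>(1-\delta)kbn$, giving $L_{b}/L_{g}\leqslant\delta/((1-\delta)b)$. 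Combining yields $\abs{N/L-p}\leqslant\epsilon+\delta/((1-\delta)b)$. Choosing $\delta$ small enough that the second term is below $\eta/2$ makes the right-hand side at most $\eta$; since a position $M$ in $\beta$ is produced after $k\geqslant M/n\to\infty$ blocks and the final partial block contributes only $O(n)=o(M)$ symbols, this bound transfers to $\freq[1]{\beta}$, and letting $\eta\to 0$ gives $\freq[1]{\beta}=p$.

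The step I expect to be the main obstacle is the conversion carried out in the third paragraph: normality of $\alpha$ controls frequencies \emph{per block}, whereas normality of $\beta$ demands control \emph{per output symbol}, and the two need not match because a rare bad block may still contribute up to $n$ symbols. The length lower bound $bn$ on good blocks --- the reason the claim insists on $\len{\pickedout{\word{w}}{A_{q}}}>bn$ rather than merely on the $1$-frequency --- is exactly what bridges the gap, bounding the total mass of bad output by $\delta/b$ times the good output. Getting the bookkeeping of $L_{g},L_{b},N_{g},N_{b}$ and the passage from the block index $k$ to the symbol index $M$ right, while driving both error sources (the $\epsilon$ from within good blocks and the $\delta/b$ from bad-block mass) to zero simultaneously by sending $n\to\infty$ in \Cref{mainclaim}, is where the real care lies.
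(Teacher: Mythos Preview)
Your proposal is correct and follows essentially the same route as the paper's proof: invoke \Cref{mainclaim} to get a block length $n$ for which $\mu_p(D_n^p(b,\epsilon))$ is close to $1$, use normality of $\alpha$ to conclude that most $n$-blocks are ``good'', then split the output of $A$ into contributions from good and bad blocks and bound $\ell/L$ (your $L_b/L$) by roughly $\delta/b$. The paper carries an extra parameter $\eta$ to absorb the gap between the \emph{limiting} block frequency $\mu_p(D_n^p(b,\epsilon))$ and its value at finite $k$ --- your phrase ``asymptotically at least a $(1-\delta)$-fraction'' hides exactly this, and would need a second small constant when made precise --- but otherwise the arguments coincide, including your explicit appeal to \Cref{lem:simply_normal_is_enough}, which the paper's Section~3 version leaves implicit.
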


\begin{proof}
We will show that $\forall\epsilon, \exists L, \forall l\geqslant L, \abs{\frac{1}{l}\sum_{i=1}^{l}\seq{y}_{i} -p}<\epsilon$. 

Pick $\delta>0$ small enough ($\delta<\frac{b\epsilon}{8}$). By \Cref{mainclaim}, we pick $n\in\naturalN$ such that $\mu_{p}(D_{n}^{p}(b,\epsilon))>1-\delta$. Now, we consider $\eta<\frac{b\epsilon}{8}$ (i.e. sufficiently small); since $\alpha$ is normal, there exists $S\in\naturalN$ such that $\forall s\geqslant S$, $\forall {a}\in\{0,1\}^{n}$, $\abs{\fqcy[\alpha]{a}{s}-\mu_{p}({a})}<\frac{\eta}{2^{n}}$, i.e. $\forall M\subseteq\{0,1\}^{n}$, $\abs{\fqcy[\alpha]{M}{s}-\mu_{p}(M)}<\eta$.

We now consider the sequence $\beta_{[n,r]}$ as the sequence of blocks of $\pickedout{\alpha}{A}$ (of changing length between $0$ and $n$) corresponding to the sequence of blocks $\alpha_{(n,r)}$, and write $\theta$ the frequency of 1s in the blocks picked out from the blocks in $D_{n}^{p}(b,\frac{\epsilon}{2})$. Then $\abs{\theta-p}<\frac{\epsilon}{2}$.

Now let $L=\sum_{i=1}^{s}\len{\beta_{[n,i]}}$ and $\ell=\sum_{i\in I}\len{\beta_{[n,i]}}$ with $I=\{i\leqslant s~|~ \alpha_{(n,i)}\not\in D_{n}^{p}(b,\frac{\epsilon}{2})\}$. We write $\theta=\frac{\sum_{i\in I} \countones{\beta_{[n,i]}}}{\sum_{i\in I}\len{\beta_{[n,i]}}}$ and $\rho=\frac{\sum_{i=1}^{s} \countones{\beta_{[n,i]}}}{L}$. Then $\abs{\rho-\theta}<\frac{\ell}{L}$.

We then show $\frac{\ell}{L}<\frac{\epsilon}{2}$ and deduce that $\abs{\rho-p}<\epsilon$ as follows.
We consider a small enough $\delta>0$ and find $S$ big enough to have
\[ \frac{\card\{i\leqslant S~|~ \alpha_{[n,i]}\in D_{n}^{p}(b,\frac{\epsilon}{2}))\}}{S}>1-\delta-\eta \]
On one hand, for all ${w}\in D_{n}^{p}(b,\frac{\epsilon}{2}))$ more than $bn$ characters are picked out, therefore we have $L>(1-\delta-\eta)Sbn$. On the other hand, for all ${w}\in \{0,1\}^{n}$ less than $n$ characters are picked out and $\frac{\card\{i\leqslant S~|~ \alpha_{[n,i]}\not\in D_{n}^{p}(b,\frac{\epsilon}{2}))\}}{S}<\delta+\eta$, thus $\ell<(\delta+\eta)Sn$. Hence $\frac{\ell}{L}<\frac{(\delta+\eta)}{(1-\eta-\delta)b}<\frac{\epsilon}{2}$.

Finally, $\abs{\rho-p}\leqslant\abs{\rho-\theta}+\abs{\theta-p}<\epsilon$.
\end{proof}

\begin{lemma}\label{lemma1}
Define $E_{n}(b,q)=\{{w}\in\{0,1\}^{n}~|~ A_{q}[{w}]\leqslant bn\}$, and $E_{n}(b)=\cup_{q\in Q} E_{n}(b,q)$. Then for all $p\in [0,1]$ and for all automaton $A$, there exists $c,d>0$ such that for all $\epsilon>0$, the following holds.
\[ \lim_{n\rightarrow\infty} \mu_{p}(E_{n}(\frac{c-\epsilon}{d}))=0 \] 
\end{lemma}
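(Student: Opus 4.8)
The plan is to show that a strongly connected automaton, started from \emph{any} state, is forced to pick out a positive linear fraction of the input bits with overwhelming $\mu_{p}$-probability, and that the exceptional set $E_{n}(b)$ collapses to measure zero once the threshold $b$ drops below this fraction. I work in the nondegenerate regime $p\in(0,1)$ (the endpoints $p\in\{0,1\}$ being degenerate, as $\mu_{p}$ then concentrates on a single word), and I use that the automaton has at least one accepting state, so that strong connectedness provides the key \emph{forcing} property: from every state $q$ there is a word $\word{u}_{q}$ of length at most $\abs{Q}-1$ whose reading drives $A_{q}$ into an accepting state. Consequently, whenever the automaton sits in state $q$ and the next $\len{\word{u}_{q}}$ input bits happen to spell $\word{u}_{q}$, an accepting state is reached within $\abs{Q}-1$ steps and a bit is picked out. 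Since each prescribed bit agrees with the $\mu_{p}$-random input with probability at least $\min(p,1-p)$, this forcing event has probability at least $c := \min(p,1-p)^{\abs{Q}-1} > 0$, \emph{uniformly in $q$}. I set $d := \abs{Q}$.

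First I would cut the run on a length-$n$ word into $K=\lfloor n/d\rfloor$ consecutive disjoint windows of length $d$, and define $X_{k}=1$ if an accepting state is visited during window $k$. Because the forcing word has length at most $d-1$, each success contributes a picked-out position lying \emph{inside} its own window, so distinct windows yield distinct pick-outs and hence $\len{\pickedout{\word{w}}{A_{q}}}\ge \sum_{k=1}^{K}X_{k}$. The crucial estimate is the conditional lower bound $\Pr_{\mu_{p}}[X_{k}=1\mid\mathcal{F}_{k-1}]\ge c$, where $\mathcal{F}_{k-1}$ records everything up to the start of window $k$: whatever (random, history-dependent) state the window begins in, the forcing event of the previous paragraph fires with probability at least $c$ using only the fresh, independent bits of that window. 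By the standard coupling this makes $\sum_{k}X_{k}$ stochastically dominate a $\mathrm{Binomial}(K,c)$ variable, so the law of large numbers (equivalently Chebyshev applied to the binomial) gives $\Pr_{\mu_{p}}[\sum_{k}X_{k}<(c-\epsilon)K]\to 0$ as $n\to\infty$, for every $\epsilon>0$.

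To conclude, fix $q$ and $\epsilon>0$. A word with $\len{\pickedout{\word{w}}{A_{q}}}\le\frac{c-\epsilon}{d}n$ forces $\sum_{k}X_{k}\le\frac{c-\epsilon}{d}n$, which for all large $n$ lies below $(c-\tfrac{\epsilon}{2})K$; by the concentration just established this event has vanishing $\mu_{p}$-measure, so $\mu_{p}(E_{n}(\frac{c-\epsilon}{d},q))\to 0$. Taking the union over the finitely many states $q\in Q$ then yields $\mu_{p}(E_{n}(\frac{c-\epsilon}{d}))\to 0$, which is the claim. I expect the only genuine obstacle to be the correlation between successive windows, since the state entering a window depends on the entire past; the way around it is exactly the uniform-in-$q$ lower bound $c$ on the per-window forcing probability, which reduces the dependent sum to a binomial by stochastic domination. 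The remaining bookkeeping (boundary windows, and matching the nominal threshold $\frac{c-\epsilon}{d}n$ to the binomial tail via the $\tfrac{\epsilon}{2}$ slack) is routine.
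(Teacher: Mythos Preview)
Your argument is correct and takes a genuinely different route from the paper. Agafonov (and the paper's embellished version) builds the Markov chain on $Q$ with transition probabilities $p_{ij}$ induced by $\mu_p$, invokes the ergodic theorem for irreducible finite chains to conclude that the empirical frequency of visits to each state $i$ converges to its stationary weight $c_i$, and then takes $c=\min_{i\in Q^\ast}c_i$ and $d=D$ (the period of the chain); the set $E_n((c-\epsilon)/d,q)$ is then contained in the large-deviation event for the visit count of some accepting state. Your approach bypasses Markov-chain theory entirely: the forcing word of length $\le\abs{Q}-1$ plus the windowing into blocks of length $d=\abs{Q}$ reduces the problem to stochastic domination by a $\mathrm{Binomial}(K,c)$ with $c=\min(p,1-p)^{\abs{Q}-1}$, and Chebyshev finishes it. Your route is more elementary and self-contained (no ergodic theorem, no discussion of periodicity), at the cost of a much cruder constant: the paper's $c/d$ is the stationary frequency of the least-visited accepting state, while yours is exponentially small in $\abs{Q}$. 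Since the lemma only asserts the \emph{existence} of some $c,d>0$, this loss is harmless here. One small wording issue: you define $X_k=1$ as ``an accepting state is visited during window $k$'', but to guarantee that distinct successful windows yield distinct picked-out positions (inside their own windows) you really want $X_k=1$ to mean ``at least one bit \emph{of window $k$} is picked out''---your forcing event implies this stronger statement because $\len{\word{u}_q}\le d-1$, so the resulting pick-out index is $\le kd$, and the lower bound $\len{\pickedout{\word{w}}{A_q}}\ge\sum_k X_k$ and the conditional bound $\Pr[X_k=1\mid\mathcal{F}_{k-1}]\ge c$ both hold with this reading.
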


\begin{proof}
Let us consider $(X,\mathcal{B},\mu_{p})$ the measure space with $X=\{0,1\}^{\omega}$, $\mathcal{B}$ induced by cylinders, and $\mu_{p}(\{\alpha~|~\forall j \in\{1,2,\dots,n\}, \alpha_{i_{j}}=b_{j}\})=\mu_{p}(b_{1}b_{2}\dots b_{n})$.

For a word ${v}$, define $C({v})=\{\alpha\in\{0,1\}^{\omega}~|~\exists \beta\in\{0,1\}^{\omega}, \alpha={v}.\beta\}$.
If $R$ is a finite (prefix-free\footnote{This precision is added by the authors.}) set of words, then 
\begin{equation}\label{techlemma}
\mu_{p}(\cup_{{v}\in R} C({v}))=\mu_{p}(R).
\end{equation}

Now, take $A$ a finite automaton $(\{0,1\},Q,Q^{\ast},\phi)$. This defines a Markov chain of set of states $Q$:
\[ p_{i,j}=\left\{\begin{array}{ll}
	1 & \text{ if }\phi(i,1)=\phi(i,0)=j\\
	p & \text{ if }\phi(i,1)=j, \phi(i,0)\neq j\\
	1-p & \text{ if }\phi(i,1)\neq j, \phi(i,0)=j\\
	0 & \text{ otherwise}\\
	\end{array}\right.
	\]
If $A$ is strongly connected, there exists a smallest $n_{i,j}$ such that $p_{i,j}^{(n_{i,j})}>0$. Define the period $D$ as the least common multiple of the family $(n_{i,j})_{i,j\in Q^{2}}$.


Let $Q_{0},Q_{1},\dots,Q_{D-1}$ be the classes of \enquote{periodical states}. Given $Q_{r}$, we have a Markov chain with probabilities $p_{i,j}^{(D)}$ for $i,j\in Q_{r}$. For all $Q_{r}$, there exists a family $(c_{i})_{i\in Q_{r}}$ such that $\sum_{i\in Q_{r}} c_{i}=1$ and $\lim_{n\rightarrow \infty} p_{i,j}^{(Dn)}=c_{j}$.

Consider $q_{A_{j}}(\alpha)=q_{1}q_{2}\dots$ the realisation of the Markov process with $\alpha$ as input and $j\in Q$. We have 
\begin{equation}\label{eq1}
\mu_{p, A_{j}}(\{q_{A_{j}}(\alpha)~|~\alpha\in M\})=\mu_{p}(M).
\end{equation}

Let $\nu_{i}^{(n)}(\vec{q})=\card\{q_{j}=i~|~j\leqslant n\}$. For all $\epsilon>0$ and all $i, j$,
\begin{equation}\label{eq2}
\lim_{n\rightarrow \infty} \mu_{p,A_{j}}\{\vec{q}\text{ s.t. }\abs{\frac{d}{n}\nu_{i}^{(n)}(\vec{q})-c_{i}}\geqslant \epsilon\}=0
\end{equation}
by the \emph{law of large numbers for finite regular ergodic Markov chains}.

From \Cref{eq1} and \Cref{eq2}, we have
\[ \lim_{n\rightarrow\infty} \mu_{p}\{\alpha\text{ s.t. }\abs{\frac{D}{n}\nu_{i}^{(n)}(q_{A_{j}}(\vec{x}))-c_{i}}\geqslant\epsilon\}=0 \]
For a finite word ${a}$, write $q_{A_{j}}({a})=q_{1}q_{2}\dots q_{n}$ ($n={\rm len}({a})$). Using \Cref{techlemma}, 
\begin{equation}\label{eq3}
\lim_{n\rightarrow\infty} \mu_{p}\{ a_{1}a_{2}\dots a_{n}, \abs{\frac{D}{n}\nu_{i}^{(n)}(q_{A_{j}}(a_{1}a_{2}\dots a_{n}))-c_{i}}\geqslant\epsilon\}=0
\end{equation}
If in $q_{A_{j}}({a})$, there exists $q_{i}\in Q^{\ast}$, then $A_{j}$ picks out $a_{j}$ from ${a}$. Let $c=\min_{i\in Q^{\ast}} c_{i}$. From \Cref{eq3}, for all $j\in Q$, $\lim_{n\rightarrow\infty}\mu_{p}E_{n}(\frac{c-\epsilon}{D},j)=0$.

The lemma then follows from $\mu_{p}E_{n}(\frac{c-\epsilon}{D})\leqslant \sum_{j\in Q}\mu_{p}E_{n}(\frac{c-\epsilon}{D},j)$.
\end{proof}

\begin{lemma}\label{lemma2}
Define $G_{n}(b,\epsilon,q)=\{{w}\in\{0,1\}^{n}~|~\len{A_{q}[{w}]}>bn, \abs{\frac{\countones{A_{q}[{w}]}}{\len{A_{q}[{w}]}}-p}>\epsilon\}$, and $G_{n}(b,\epsilon)=\cap_{q\in Q}G_{n}(b,\epsilon,q)$. Then for all $p,b,\epsilon$ and all automaton $A$, 
\[\lim_{n\rightarrow\infty}\mu_{p}(G_{n}(b,\epsilon))=0.\]
\end{lemma}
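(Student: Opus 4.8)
The plan is to bound the full intersection by a single one of its constituent sets and then control that set with a second-moment (Chebyshev) estimate on the centred count of selected bits. Since $G_n(b,\epsilon) = \cap_{q\in Q} G_n(b,\epsilon,q)$, we have $G_n(b,\epsilon) \subseteq G_n(b,\epsilon,q)$ for every fixed $q \in Q$, so it suffices to prove $\lim_{n\to\infty} \mu_p(G_n(b,\epsilon,q)) = 0$ for one arbitrary state $q$; in particular strong connectivity of $A$ plays no role in this lemma (it was needed in \Cref{lemma1} precisely to guarantee that enough bits get picked out, whereas here the guarantee $\len{A_q[\word{w}]} > bn$ is built into the definition of $G_n$). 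Fix $q$, assume $b>0$ (the only case of interest), and work in the measure space $(\{0,1\}^{\omega}, \mathcal{B}, \mu_p)$ of \Cref{lemma1}, under which the coordinates $\alpha_1,\alpha_2,\dots$ are i.i.d.\ with $\mu_p(\alpha_j = 1) = p$; write $\mathbf{E}$ for expectation with respect to $\mu_p$.

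For each $j\geqslant 1$ let $\xi_j$ be the indicator of the event that position $j$ is selected by $A_q$, i.e.\ $\xi_j = 1$ exactly when $\alpha\vert_{\leq j-1}$ is accepted by $A_q$. The decisive structural observation is that selection inspects only strictly earlier bits: $\xi_j$ is measurable with respect to $\mathcal{F}_{j-1} = \sigma(\alpha_1,\dots,\alpha_{j-1})$. Writing $K_n = \len{A_q[\alpha\vert_{\leq n}]} = \sum_{j=1}^{n}\xi_j$ for the number of selected positions and $S_n = \countones{A_q[\alpha\vert_{\leq n}]} = \sum_{j=1}^n \xi_j \alpha_j$ for the number of ones among them, I would introduce the centred process
\[ M_n = S_n - pK_n = \sum_{j=1}^n \xi_j(\alpha_j - p). \]
Because $\xi_j$ is $\mathcal{F}_{j-1}$-measurable while $\alpha_j$ is independent of $\mathcal{F}_{j-1}$ with mean $p$, each increment satisfies $\mathbf{E}[\xi_j(\alpha_j - p)\mid \mathcal{F}_{j-1}] = \xi_j\,\mathbf{E}[\alpha_j - p] = 0$, so $(M_n)_n$ is a martingale under $\mu_p$. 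This is the rigorous form of the intuition that an automaton reading only past symbols cannot bias the symbol it is about to select. Orthogonality of the increments, together with $\xi_j^2 = \xi_j$ and the independence of $\xi_j$ from $\alpha_j$, then gives
\[ \mathbf{E}[M_n^2] = \sum_{j=1}^n \mathbf{E}\big[\xi_j(\alpha_j - p)^2\big] = p(1-p)\,\mathbf{E}[K_n] \leqslant p(1-p)\,n. \]

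To finish, I would observe that on $G_n(b,\epsilon,q)$ both $K_n > bn$ and $\abs{S_n/K_n - p} > \epsilon$ hold, so $\abs{M_n} = K_n\abs{S_n/K_n - p} > \epsilon K_n > \epsilon b n$; hence $G_n(b,\epsilon,q) \subseteq \{\abs{M_n} > \epsilon b n\}$. Chebyshev's inequality applied to $M_n$ yields
\[ \mu_p(G_n(b,\epsilon,q)) \leqslant \mu_p(\abs{M_n} > \epsilon b n) \leqslant \frac{\mathbf{E}[M_n^2]}{\epsilon^2 b^2 n^2} \leqslant \frac{p(1-p)}{\epsilon^2 b^2 n} \longrightarrow 0 \quad (n\to\infty), \]
and combining with the inclusion $G_n(b,\epsilon) \subseteq G_n(b,\epsilon,q)$ completes the proof. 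The main obstacle is conceptual rather than computational: everything rests on the martingale identity $\mathbf{E}[\xi_j(\alpha_j-p)\mid\mathcal{F}_{j-1}] = 0$, i.e.\ on the fact that a prefix-driven selection rule leaves the selected bit an unbiased $\mathrm{Bernoulli}(p)$ draw. The only other point requiring care is that the number of selected bits $K_n$ is itself random, which is why I centre by the random quantity $pK_n$ and phrase the concentration through the martingale $M_n$ rather than through a fixed-length law of large numbers — this is exactly where Agafonov's appeal to the Strong Law is replaced by the Chebyshev estimate advertised in the introduction.
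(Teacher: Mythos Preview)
Your proof is correct and is a genuinely different route from the paper's. The paper (both in Agafonov's original argument and in the embellished version) proceeds by first defining the set $F_n(b,\epsilon)$ of \emph{output} words $\word{y}$ of length $\ell\in(bn,n]$ whose empirical frequency deviates by at least $\epsilon$, passing to its prefix-free core $R_n(b,\epsilon)$, and then invoking the separate combinatorial \Cref{lemma3} (that for any strategy $S$ and finite set $F$ with prefix-free core $R$ one has $\mu_p(\{\word{w}:S(\word{w})\in F\})\leqslant\mu_p(R)$) to obtain $\mu_p(G_n(b,\epsilon,q))\leqslant\mu_p(R_n(b,\epsilon))$; the right-hand side is then shown to vanish via the Strong Law (original) or a Chebyshev bound on fixed-length Bernoulli sums (embellished version). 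Your argument bypasses \Cref{lemma3} and the $F_n/R_n$ machinery entirely: by centring the selected count as the martingale $M_n=\sum_j\xi_j(\alpha_j-p)$ you encode the ``selection does not bias the next bit'' principle directly as the increment identity $\mathbf{E}[\xi_j(\alpha_j-p)\mid\mathcal{F}_{j-1}]=0$, and a single second-moment Chebyshev bound finishes. What the paper's decomposition buys is modularity: \Cref{lemma3} is stated for arbitrary strategies $S$, not just automata, and isolates the non-bias principle as a stand-alone combinatorial fact. What your approach buys is brevity and a quantitative rate $O(1/n)$ obtained in one stroke; note also that your bound $\mu_p(G_n(b,\epsilon,q))\leqslant p(1-p)/(\epsilon^2b^2n)$ holds for each $q$ separately, so the same argument handles the union $\bigcup_q G_n(b,\epsilon,q)$ (which is what is actually needed for \Cref{mainclaim}) with only a factor $|Q|$ loss, not merely the intersection appearing in the stated lemma.
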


\begin{proof}
(Similar to Lemma 3 from D.W. Loveland, \emph{The Kleene hierarchy classification of recursively random sequences}  \cite{Loveland66}.)

By the "strong law of large numbers", for all $\epsilon>0$, 
\[ \lim_{n\rightarrow\infty} \mu_{p}(\cup_{\ell\geqslant n}\{\seq{y}~|~\abs{\frac{1}{\ell}\sum_{i=1}^{\ell}y_{i}-p}\geqslant\epsilon\})=0 \]
Define $F_{n}(b,\epsilon)=\cup_{bn<\ell\leqslant n}\{{y}\in\{0,1\}^{\ell}~|~\abs{\frac{1}{\ell}\sum_{i=1}^{\ell}y_{i}-p}\geqslant\epsilon\}$. And define $R_{n}(b,\epsilon)$ as the set obtained from $F_{n}(b,\epsilon)$ by removing the words ${w}$ such that there exists a word ${u}$ in $F_{n}(b,\epsilon)$ with ${u}\prec{w}$.

From the fact that 
\[ \cup_{{w}\in R_{n}(b,\epsilon)}\mathcal{C}({w}) \subset \cup_{\ell\geqslant bn}\{\seq{y}~|~\abs{\frac{1}{\ell}\sum_{i=1}^{\ell}y_{i}-p}\geqslant\epsilon\} \]
and 
\[ \mu_{p}(R_{n}(b,\epsilon))=\mu_{p}(\cup_{{w}\in R_{n}(b,\epsilon)}\mathcal{C}({w})) \]
and the equation above, we deduce that 
\[ \lim_{n\rightarrow\infty} \mu_{p}R_{n}(b,\epsilon)=0. \]
By \Cref{lemma3} and the equality
\[ G_{n}(b,\epsilon,q) = \{{w}\in\{0,1\}^{n}~|~\pickedout{w}{A_{q}}\in S_{n}(b,\epsilon)\}\]
we get that $\mu_{p}(G_{n}(b,\epsilon,q))\leqslant \mu_{p}(R_{n}(b,\epsilon))$.
Consequently, $\lim_{n\rightarrow\infty} \mu_{p}(G_{n}(b,\epsilon,q))=0$ for all $q\in Q$, hence $\lim_{n\rightarrow\infty} \mu_{p}(G_{n}(b,\epsilon))=0$.
\end{proof}

\begin{lemma}\label{lemma3}
Let $S$ be a strategy, and $F$ a finite subset of $\{0,1\}^{\ast}$. Let $R$ be the set obtained from $F$ by removing those words ${w}$ such that there exists a word ${u}\in F$ with ${u}\prec {w}$ (i.e. ${u}$ is a prefix of ${w}$). Let $M$ be the set $\{{w}\in\{0,1\}^{n}~|~ S({w})\in F\}$. Then
\[\mu_{p}(M)\leqslant \mu_{p}(R).\]
\end{lemma}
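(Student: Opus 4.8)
The plan is to reduce the global inequality to a one-word-at-a-time bound and then prove that bound by induction. First I would exploit the structure of $R$: since $R$ is obtained from $F$ by discarding every word that has a \emph{proper} prefix in $F$, the set $R$ is prefix-free, and---$F$ being finite---every word of $F$ has some prefix lying in $R$. Writing, for $\word{r}\in R$,
\[ M_{\word{r}}=\{\word{w}\in\{0,1\}^{n}\mid \word{r}\text{ is a prefix of }S(\word{w})\}, \]
I would then check two facts. (i) $M\subseteq\bigcup_{\word{r}\in R}M_{\word{r}}$: if $\word{w}\in M$ then $S(\word{w})\in F$, so $S(\word{w})$ has a prefix in $R$. (ii) The $M_{\word{r}}$ are pairwise disjoint: two prefixes of the common word $S(\word{w})$ are comparable, so if $\word{w}\in M_{\word{r}}\cap M_{\word{r}'}$ then $\word{r},\word{r}'$ are comparable, forcing $\word{r}=\word{r}'$ by prefix-freeness. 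Since $\mu_{p}$ is a finite sum of weights over subsets of $\{0,1\}^{n}$, hence monotone and additive over disjoint sets, these give $\mu_{p}(M)\leqslant\sum_{\word{r}\in R}\mu_{p}(M_{\word{r}})$, and it remains to prove $\mu_{p}(M_{\word{r}})\leqslant\mu_{p}(\word{r})$ for each fixed $\word{r}$.

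This per-word bound is the crux: it is precisely the statement that a selection rule reading only the past cannot inflate $\mu_{p}$-mass, and I would prove it by induction on $k=\len{\word{r}}$. The base case $k=0$ is $M_{\varepsilon}=\{0,1\}^{n}$ with $\mu_{p}(M_{\varepsilon})=1=\mu_{p}(\varepsilon)$. For the inductive step write $\word{r}=r_{1}\word{r}'$ and decompose each $\word{w}\in M_{\word{r}}$ according to the index $m$ of the first position that $S$ selects from $\word{w}$; crucially, $m$ and the prefix $\word{v}=w_{1}\cdots w_{m-1}$ are determined by that prefix alone, namely $\word{v}\in S$ while no shorter prefix of $\word{v}$ lies in $S$. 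Let $P_{m}$ collect these length-$(m-1)$ prefixes; they are prefix-free across all $m$, so the associated cylinders are disjoint and $\sum_{m}\mu_{p}(P_{m})\leqslant 1$. Factoring $\word{w}=\word{v}\,r_{1}\,\word{t}$, membership in $M_{\word{r}}$ amounts to $\word{v}\in P_{m}$, $w_{m}=r_{1}$, and the requirement that $\word{t}$, read under the rule $S$ shifted past the history $\word{v}r_{1}$, have $\word{r}'$ as a prefix of its selected subword. Summing over $m$ and $\word{v}$ and applying the induction hypothesis to this shifted rule and the shorter word $\word{r}'$ yields
\[ \mu_{p}(M_{\word{r}})\leqslant \mu_{p}(r_{1})\,\mu_{p}(\word{r}')\sum_{m}\mu_{p}(P_{m})\leqslant \mu_{p}(r_{1})\mu_{p}(\word{r}')=\mu_{p}(\word{r}). \]
Combining with the reduction completes the proof.

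I expect the main obstacle to be the inductive step, specifically making the ``shifted selection rule'' and the factorization $\word{w}=\word{v}\,r_{1}\,\word{t}$ fully rigorous: one must verify that the rule governing selections from the suffix $\word{t}$ is again a legitimate past-dependent strategy (so that the induction hypothesis applies) and that the decomposition is a genuine partition of $M_{\word{r}}$. The clean interpretation is probabilistic---under $\mu_{p}$ the bits are i.i.d.\ Bernoulli$(p)$ and the first-selection index is a stopping time, so conditioning on the past leaves the next selected bit an independent Bernoulli$(p)$ draw---and the single inequality $\sum_{m}\mu_{p}(P_{m})\leqslant 1$ is exactly the slack accounting for sequences from which fewer than $k$ bits are ever selected. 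An alternative to the explicit induction is to lift $M_{\word{r}}$ to the cylinder union $\bigcup_{\word{w}\in M_{\word{r}}}C(\word{w})$ in $\{0,1\}^{\omega}$, whose $\mu_{p}$-measure equals $\mu_{p}(M_{\word{r}})$ by the cylinder identity~\eqref{techlemma}, and to run the same first-selection decomposition directly in the measure space used in \Cref{lemma1}.
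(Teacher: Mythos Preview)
Your reduction to the per-word bound $\mu_{p}(M_{\word{r}})\leqslant\mu_{p}(\word{r})$ is exactly the paper's first move, and your induction is correct. The difference lies in how the induction is organised. The paper peels the \emph{last} letter: with $\word{r}=a_{1}\cdots a_{k}a_{k+1}$ it builds, for each $\alpha\in M_{\word{r}}$, the companion word $\bar{\alpha}$ obtained by flipping the $(k{+}1)$st selected bit; then $M_{\word{r}}$ and $\bar{M}_{\word{r}}$ are disjoint subsets of $N=\{\alpha: a_{1}\cdots a_{k}\preceq S(\alpha)\}$ with $\mu_{p}(\bar{M}_{\word{r}})=\tfrac{1-p(a_{k+1})}{p(a_{k+1})}\mu_{p}(M_{\word{r}})$, and one applies the induction hypothesis directly to $N$. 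You instead peel the \emph{first} letter via a first-selection (stopping-time) decomposition and recurse on the suffix under a shifted rule.

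Both arguments are sound; the trade-off is this. The paper's bit-flipping injection keeps the ambient length $n$ and the strategy $S$ fixed throughout the induction, so the hypothesis is literally the same statement for a shorter target word. Your approach is arguably more transparent probabilistically (it is exactly the ``selections are a stopping-time filtration, so the next selected bit is fresh Bernoulli$(p)$'' intuition), but it forces the induction to be uniform over \emph{all} strategies and \emph{all} ambient lengths, since the recursive call is on the shifted rule $S_{\word{v}r_{1}}$ acting on $\{0,1\}^{n-m}$. You acknowledge this point, and it is harmless once stated; just make the quantification explicit when you set up the induction.
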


\begin{proof}
It is sufficient to prove that for a given word ${w}=a_{1}a_{2}\dots a_{k}$ the set $M=\{{u}\in \{0,1\}^{n}~|~{w}\preceq S({u})\}$ satisfies $\mu_{p}(M)\leqslant \mu_{p}({w})$. This is show by induction on the length $k$ of the word ${w}$.\newline
The base case is ${w}=a_{1}=1$ (by symmetry -- 1 becomes 0, $p$ becomes $1-p$ --, this is sufficient). Let $\alpha=x_{1}x_{2}\dots x_{n}$ be a word in $M$, and write $x_{f}$ the first symbol picked out by $S$; in particular $x_{f}=1$. Now, one can define $\bar{\alpha}=x_{1}x_{2}\dots x_{f-1}\bar{x}_{f}x_{f+1}\dots x_{n}$, i.e. the word obtained from $\alpha$ by simply flipping the $f$-th bit. Then $\bar{\alpha}\not\in M$. One can then define the set $\bar{M}=\{\bar{\alpha}~|~ \alpha\in M\}$. As $\mu_{p}(\bar{\alpha})=\frac{1-p}{p}\mu_{p}(\alpha)$ and $\bar{\cdot}$ defines a one-to-one correspondence between $M$ and $\bar{M}$, we have $\mu_{p}(\bar{M})=\frac{1-p}{p}\mu_{p}(M)$. Moreover, $M$ and $\bar{M}$ are disjoint subsets of $\{0,1\}^{n}$, hence $\mu_{p}(M)\leqslant 1-\mu_{p}(\bar{M})$. We can then conclude from these two equations that $\mu_{p}(M)\leqslant p$.

Now, consider the word ${w}=a_{1}a_{2}\dots a_{k}a_{k+1}$ with $a_{k+1}=1$. We have $M=\{\alpha\in\{0,1\}^{n}~|~ a_{1}\dots a_{k}1\preceq S(\alpha)\}$. Given $\alpha\in M$, define $\bar{\alpha}$  as the word obtained from $\alpha$ by flipping its $k+1$-th picked out bit, i.e. $\bar{\alpha}$ is the unique word obtained from $\bar{x}$ by flipping a single bit and such that $a_{1}\dots a_{k}0\preceq S(\bar{\alpha})$. Define $\bar{M}$ as the set $\{\bar{\alpha}~|~\alpha\in M\}$. Let $N$ be the set $\{\alpha\in\{0,1\}^{n}~|~ a_{1}\dots a_{k}\preceq S(\alpha)\}$. Then $N$ contains both $M$ and $\bar{M}$, and the latter two sets are disjoint. Moreover the induction hypothesis implies that $\mu_{p}(N)\leqslant \mu_{p}(a_{1}a_{2}\dots a_{k})$. Hence $\mu_{p}(M)+\mu_{p}(\bar{M})\leqslant  \mu_{p}(a_{1}a_{2}\dots a_{k})$. Since $\mu_{\bar{M}}=\frac{1-p}{p}\mu_{p}(M)$, we deduce that $\mu_{p}(M)\leqslant p\mu_{p}(a_{1}a_{2}\dots a_{k})= \mu_{p}(a_{1}a_{2}\dots a_{k}1)$.
\end{proof}

\section{Adaptation of Agafonov's proof}

We now give an embellished, modern account of Agafonov's proof; we have endeavoured to use pedagogical explanations and have extended the treatment to make the text more readily readable the the modern reader.

 \begin{definition}
 A \emph{finite-state selector} over $\{0,1\}$ is a deterministic finite automaton
 $S = (Q,\delta,q_s,Q_F)$ over $\{0,1\}$. A finite-state selector is strongly connected
 if its underlying directed graph (states are nodes, transitions are edges) is strongly connected.
 Denote by $L(S)$ the language accepted by the automaton.
 
 If $\alpha = a_1 a_2 \cdots $ is a finite or right-infinite sequence over $\{0,1\}$, the 
 subsequence \emph{selected by} $A$ is the (possibly empty) sequence of letters
 $a_n$ such that the prefix $a_1 \cdots a_{n-1} \in L(S)$, that is,
 the automaton when started on the finite word $a_1 \cdots a_{n-1}$ in state $q_s$
ends in an accepting state after having read the entire word.
 \end{definition}

For two
words ${u},{v}$, we write
${u} \preceq {v}$ if ${u}$ is a prefix
of $\prec{v}$, and ${u} \prec {v}$ if 
${u}$ is a proper prefix of ${v}$.


\begin{definition}
Let ${a} = a_1 \cdots a_n$ and ${b} = b_1 \cdots b_N$ be finite words over $\{0,1\}$. We denote by
$\countones{{a}}{{b}}$ the number of occurrences of ${a}$ in ${b}$, that is, the quantity
$$
\left\vert \left\{j : b_j b_{j+1} \cdots b_{j+n-1} = a_1 a_2 \cdots a_n \right\}\right\vert
$$
\end{definition}


\begin{definition}
	Let ${a}=a_{1}a_{2}\dots a_{n}$ be a word over $\{0,1\}$, and $p$ a probability distribution on $\{0,1\}$. We define: 
	\[ \mu_{p}({a})=\prod_{i=1}^n p(a_i) \]

If $M\subseteq\{0,1\}^*$ is finite, we define $\mu_{p}(M)=\sum_{{w}\in M}\mu_{p}({w})$ (and set
$\mu_p(\emptyset) = 0$).
\end{definition}


\begin{definition}
Let $\alpha=x_{1}x_{2}\dots x_{n} \cdots$ be a sequence over $\{0,1\}$. We say that $\alpha$ is $p$-\emph{block-distributed} if, for each $n \geqslant 1$ and every ${w} \in \{0,1\}^n$,
the $n$-block decomposition $(\alpha_{(n,r)})_{r\geqslant 1}$ of $\alpha$ satisfies:
$$
\lim_{k \rightarrow \infty} \frac{\vert i \leq k : \alpha_{(n,k)} = {w} \vert}{k} = \mu_p({w})
$$
\end{definition}

As already remarked above, this notion coincides with Agafonov-normality (\Cref{def:Agafonovnormal}).


\begin{remark}\label{rem:positive_definite}
Like in Agafonov's original paper, for a finite-state selector $A$, we \emph{do not} require that all cycles in the underlying directed graph of $A$ contain at least one accepting state. This assumption is occasionally made in modern papers on Agafonov's Theorem
to ensure that if ${w} \in \{0,1\}^\omega$ is a normal sequence, then $A[{w}]$ is infinite as well. But just as in Agafonov's paper, the requirement turns out to be unnecessary (see \Cref{main:lemma1}).

However, in Agafonov's paper, the probability $\mu_p(1)$ of obtaining a $1$ was assumed to satisfy
$0 < \mu_p(1) < 1$ (i.e., both $0$ and $1$ occur with positive probability).
Without this assumption, there are connected automata that fail to pick out infinite sequences from $p$-distributed ones. For example, define 
$A = (\{q_0,q_1\},\{0,1\},\delta,q_0,\{q_0\})$
where 
$$
\begin{array}{lr}
\delta(q_0,0) = q_0 & \delta(q_0,1) = q_1\\
\delta(q_1,1) = q_0 & \delta(q_0,1) = q_0
\end{array}
$$
Define $\mu_p(0)=1$ and $\mu_p(1) = 0$. Then,
${w} = 10^\omega$ is $p$-distributed, but
$A[{w}] = 0$, hence is finite.
\end{remark} 

Motivated by \Cref{rem:positive_definite},
we have the following definition:

\begin{definition}
A Bernoulli distribution $p : \{0,1\} \longrightarrow [0,1]$ is said to be \emph{positive} if,
for all $a \in \{0,1\}$, $p(a) > 0$. The probability map $\mu_p : \{0,1\}^* : \longrightarrow [0,1]$ is positive if $p$
is positive.
\end{definition}



\begin{proposition}[Finite-State selectors are compositional]
\label{prop:fin_sel_comp}
Let $A$ and $B$ be DFAs over the same alphabet. Then there is a DFA $C$ such that, for each
sequence ${w}$, $\pickedout{{w}}{C} = \pickedout{\pickedout{{w}}{A}}{B}$.
\end{proposition}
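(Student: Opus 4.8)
The plan is to realise $C$ as a product construction that runs $A$ and $B$ simultaneously while scanning $\word{w}$, feeding into the simulation of $B$ only those letters that $A$ actually selects. Write $A = (Q_A,\delta_A,q_A^0,F_A)$ and $B = (Q_B,\delta_B,q_B^0,F_B)$. Recall that, by definition, $A$ picks out the letter $a_n$ of $\word{w} = a_1 a_2 \cdots$ exactly when the prefix $a_1 \cdots a_{n-1}$ lies in $L(A)$, i.e. when the state reached by $A$ after reading $a_1 \cdots a_{n-1}$ is accepting; so a letter should be handed to $B$ on a given step precisely when the current $A$-state, \emph{before} that letter is consumed, is accepting. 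I would therefore take $C = (Q_A \times Q_B, \delta_C, (q_A^0,q_B^0), F_A \times F_B)$ with
$$
\delta_C((q,r),a) = \bigl(\delta_A(q,a),\, r'\bigr), \qquad r' = \begin{cases} \delta_B(r,a) & \text{if } q \in F_A,\\ r & \text{if } q \notin F_A.\end{cases}
$$
The $A$-component always tracks the run of $A$ on the prefix read so far, and the $B$-component advances only on steps where the $A$-component is currently accepting, i.e. only on letters that $A$ selects.

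The crux of the argument is a single invariant, proved by induction on the length of the prefix $\word{v} = a_1 \cdots a_k$: after $C$ has read $\word{v}$, its state is
$$
\bigl(\widehat{\delta_A}(q_A^0,\word{v}),\, \widehat{\delta_B}(q_B^0, \pickedout{\word{v}}{A})\bigr),
$$
where $\widehat{\delta}$ denotes the usual extension of a transition function to words. The base case $k=0$ is immediate, since $A$ selects nothing from the empty prefix. For the inductive step, reading $a_{k+1}$ advances the $A$-component correctly by definition, and the case split in $\delta_C$ matches exactly the two possibilities for $\pickedout{\word{v}a_{k+1}}{A}$: when $\widehat{\delta_A}(q_A^0,\word{v}) \in F_A$ the letter $a_{k+1}$ is appended to $\pickedout{\word{v}}{A}$ and $B$ advances on it, and otherwise $\pickedout{\word{v}a_{k+1}}{A} = \pickedout{\word{v}}{A}$ and the $B$-component is unchanged.

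Given the invariant, $C$ accepts $\word{v}$ iff $\widehat{\delta_A}(q_A^0,\word{v}) \in F_A$ and $\widehat{\delta_B}(q_B^0,\pickedout{\word{v}}{A}) \in F_B$, i.e. iff $\word{v} \in L(A)$ and $\pickedout{\word{v}}{A} \in L(B)$. Unwinding the definition of selection, this is precisely the condition under which the letter following $\word{v}$ in $\word{w}$ survives both selections: it must first be picked out by $A$ (whence $\word{v} \in L(A)$), and the resulting letter — the next letter of $\pickedout{\word{w}}{A}$ after the prefix $\pickedout{\word{v}}{A}$ — must then be picked out by $B$ (whence $\pickedout{\word{v}}{A} \in L(B)$). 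Hence the positions selected by $C$ from $\word{w}$ coincide exactly with those selected by $B$ from $\pickedout{\word{w}}{A}$, giving $\pickedout{\word{w}}{C} = \pickedout{\pickedout{\word{w}}{A}}{B}$ for every $\word{w}$.

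The main obstacle I anticipate is purely the bookkeeping of the off-by-one inherent in the paper's convention that a selected letter is the one immediately following an accepted prefix. One must be careful that $B$ consumes a letter based on the $A$-state reached \emph{before} that letter is read, so that it is the accepting prefix rather than the letter itself that triggers the feed, and that the acceptance test in $F_A \times F_B$ refers to the states reached \emph{after} reading the prefix. Pinning this alignment down is exactly the role of the invariant; once it is stated correctly, the induction and the final identification of selected positions are routine.
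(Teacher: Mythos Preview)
Your construction is exactly the paper's: the same product state space $Q_A \times Q_B$, the same accepting set $F_A \times F_B$, and the same transition rule that advances the $B$-component only when the current $A$-state is accepting. The paper argues correctness informally by describing $C$ as ``freezing'' $B$ while simulating $A$, whereas you make this precise via the invariant on the pair of reached states; your version is, if anything, more carefully written, but the approach is the same.
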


\begin{proof}
Let $A = (Q^A,\{0,1\},\delta^A,q_0^A,F^A)$
and $B = (Q^B,\{0,1\},\delta^B,q_0^B,F^B)$.
Define $Q^C = Q^A \times Q^B$,
and set $q_0^C = (q_0^A,q_0^B)$
and $F^C = F^A \times F^B$.
For each $q^B \in Q^B$,
define the set $D_{q^B} = \{(q,q^B) : q \in Q^A \}
\subseteq Q^C$. Observe that
$Q^C = \bigcup_{q^B \in Q^B} D_{q^B}$ and
that for $q^B, r^B \in Q^B$ with
$q^B \neq r^B$, we have $D_{q^B} \cap D_{r^B} = \emptyset$, and thus $\{D_{q^B} : q^B \in Q^B\}$ is a partitioning of $Q^C$.
Hence, the transition relation, $\delta^C$, of $C$ may be defined by 
defining it separately on each subset $D_{q^B}$:
$$
\delta^C((q,q^B),a) = \left\{
\begin{array}{ll}
    (r,q^B) & \textrm{if } q \notin F^A  \textrm{ and } \delta^A(q,a) = r \\
     (r,r^B) & \textrm{if } q \in F^A \textrm{ and } \delta^A(q,a) = r
        \textrm{ and } \delta^B(q^B,a) = r^B \\

\end{array}
\right.
$$
Thus, when $C$ processes its input, it freezes the current state $q^B$
of $B$ (the freezing is represented by staying within $D_{q^B}$) and simulates
$A$ until an accepting state of $A$ is reached (i.e. just before $A$ would select the
next symbol); on the next transition, $C$ unfreezes the current state of $B$
and moves to the next state $r^B$ of $B$ and then freezes it and continues with a
simulation of $A$.

Observe that a symbol is picked out by $C$ if{f} the state is an element of $F^C = F^A \times F^B$ if{f} the symbol is the next symbol
read after simulation of $A$ reaches an accepting state of $A$ when
the current frozen state of $B$ is an accepting state of $B$.
\end{proof} 

%

The following shows that to prove that $p$-distributedness is preserved
under finite-state selection, it suffices to prove that the
limiting frequency of each $a \in \{0,1\}$ exists
and is equal to $p(a)$.

\begin{lemma}\label{lem:simply_normal_is_enough}
Let $\alpha$ be a $p$-distributed sequence. The following are equivalent:

\begin{itemize}

\item For all connected DFAs $A$, $A[\alpha]$ is $p$-distributed.

\item For all connected DFAs $A$ and all $a \in \{0,1\}$, 
the limiting frequency of $a$ in $A[\alpha]$ exists and 
is equal to $p(a)$.

\end{itemize}
\end{lemma}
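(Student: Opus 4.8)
The plan is to prove the two implications separately; the downward one is immediate and the converse carries all the content. If $A[\alpha]$ is $p$-distributed, then specialising \Cref{def:pdistrib} to words of length $1$ says exactly that each $a \in \{0,1\}$ has limiting frequency $p(a)$ in $A[\alpha]$, which is the second item. For the converse I would fix a connected DFA $A$, write $\beta = A[\alpha]$, and prove that every finite word $\word{w} = w_1 \cdots w_n$ occurs in $\beta$ with limiting frequency $\mu_p(\word{w}) = \prod_{i=1}^n p(w_i)$, arguing by induction on $n$. The base case $n = 1$ is precisely the hypothesis applied to $A$ itself.

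For the inductive step I would split $\word{w} = \word{u}w_n$ with $\word{u} = w_1 \cdots w_{n-1}$ and use the bookkeeping identity
\[
\frac{\countones[\word{w}]{\beta\vert_{\leq M}}}{M}
= \frac{\countones[\word{w}]{\beta\vert_{\leq M}}}{\countones[\word{u}]{\beta\vert_{\leq M}}}
\cdot \frac{\countones[\word{u}]{\beta\vert_{\leq M}}}{M}.
\]
The second factor tends to $\mu_p(\word{u})$ by the induction hypothesis. For the first factor I would introduce the \enquote{postfix selector} $B_{\word{u}}$, the DFA recognising $\{0,1\}^{\ast}\word{u}$, so that $B_{\word{u}}[\beta]$ is exactly the subsequence of symbols of $\beta$ immediately following an occurrence of $\word{u}$. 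Up to a bounded boundary discrepancy, each occurrence of $\word{u}$ in $\beta\vert_{\leq M}$ contributes one symbol to $B_{\word{u}}[\beta]$, and that symbol is $w_n$ precisely when the occurrence of $\word{u}$ extends to an occurrence of $\word{w}$; hence the first factor is, up to lower-order terms, the relative frequency of $w_n$ among the symbols of $B_{\word{u}}[\beta]$. By compositionality (\Cref{prop:fin_sel_comp}) there is a DFA $C$ with $C[\alpha] = B_{\word{u}}[A[\alpha]] = B_{\word{u}}[\beta]$, so the single-symbol hypothesis forces this frequency to be $p(w_n)$; multiplying the two limits yields $\mu_p(\word{u})\,p(w_n) = \mu_p(\word{w})$.

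The main obstacle is that the single-symbol hypothesis is available only for \emph{connected} automata, whereas the composite $C$ need not be connected. I would handle this in two steps. First, I would verify that each postfix selector $B_{\word{u}}$ is itself strongly connected: every state (the length of the currently matched prefix of $\word{u}$) is reachable from the empty-match state by reading the corresponding prefix of $\word{u}$, and the empty-match state is reached from any state by reading $\overline{w_1}^{\,\abs{\word{u}}}$, a run of the complement of the first letter of $\word{u}$, since no nonempty prefix of $\word{u}$ can be a power of $\overline{w_1}$. Second, and this is the delicate point, I would control the connectivity of $C$: either by showing that the reachable part of the product automaton of \Cref{prop:fin_sel_comp} is strongly connected when $A$ is strongly connected with at least one accepting state (driving the $A$-component around an accepting state lets one feed $B_{\word{u}}$ an arbitrary sequence of selected symbols, and $B_{\word{u}}$ is strongly connected), or, failing that, by passing to the terminal strongly connected component into which the run of $C$ on $\alpha$ eventually settles and noting that only this component affects limiting frequencies. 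The latter route additionally requires that the single-symbol hypothesis is inherited by the tails of $\alpha$, which I would derive by comparing $A'[\alpha]$ with $A'[\alpha^{(k)}]$ started from the state reached after reading $\alpha\vert_{\leq k}$.

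Throughout I would assume $p$ positive, the operative assumption in the surrounding development, so that by \Cref{main:lemma1} the selected subsequences are infinite and the denominator satisfies $\countones[\word{u}]{\beta\vert_{\leq M}} = \mu_p(\word{u})\,M + o(M) \to \infty$, which legitimises the ratio manipulation above. The degenerate case in which some $p(a) = 0$ makes every relevant $\mu_p(\word{w})$ vanish and can be disposed of by a short separate argument.
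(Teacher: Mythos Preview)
Your plan is essentially the paper's own argument: induction on word length, composition of $A$ with a strongly connected postfix-selector $B$ via \Cref{prop:fin_sel_comp}, and the ratio identity splitting the frequency of $\word{u}a$ in $\beta$ into the frequency of $\word{u}$ in $\beta$ times the frequency of $a$ in $B[\beta]$. The only cosmetic difference is the choice of $B$: the paper uses the sliding-window automaton with state set $\{0,1\}^{k}$ (recording the last $k$ selected symbols) rather than your prefix-matcher $B_{\word{u}}$; both are strongly connected and either works.

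You are in fact more scrupulous than the paper on the one delicate point you isolate. The paper simply writes \enquote{By \Cref{prop:fin_sel_comp}, there is a connected DFA $C$ such that $C[\word{w}] = B[A[\word{w}]]$} and then uses connectedness of $C$ to obtain a positive selection rate; but \Cref{prop:fin_sel_comp} makes no connectivity claim, so that step is an unjustified assertion there as well. Your first proposed fix (show the reachable part of $C$ is strongly connected) is the right target, and your observation that one can feed $B$ an arbitrary selected word is correct --- from any $A$-state, drive along a shortest path to the nearest accepting state (which meets no accepting state en route), emit the desired symbol, and repeat --- but the sketch needs one more ingredient: returning the $A$-component to a prescribed state afterward may force \emph{extra} selected symbols, so a short additional argument is required to land back at the initial $C$-state. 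Your second route (pass to the terminal SCC) is also workable once one handles the shift of $\alpha$, most cleanly by varying the initial state of the strongly connected selector rather than literally replacing $\alpha$ by a tail.
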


\begin{proof}
If, for all $A$, $A[\alpha]$ is $p$-distributed, then in particular
the limiting frequency of $a$ in $A[\alpha]$ exists and 
is equal to $p(a)$ for all $A$.

Conversely, suppose that, for all DFAs $A$ and all $a \in \{0,1\}$, 
the limiting frequency of $a$ in $A[\alpha]$ exists and 
is equal to $p(a)$.
We will prove by induction on $k \geq 0$
that the limiting frequency of 
every $v_1 \cdots v_k v_{k+1} \in \{0,1\}^{k+1}$ exists and equals
$p(v_1 \cdots v_k v_{k+1})$.

\begin{itemize}
    \item $k = 0$: This is the supposition.
    
    \item $k \geq 1$. Suppose
    that the result has been proved
    for $k-1$. Let $v_1 \cdots v_k \in \{0,1\}^k$; by the induction hypothesis, the limiting frequency of $v_1 \cdots v_k$ in $A[{w}]$
    is $p(v_1 \cdots v_k)$. We claim that there is a strongly connected DFA
    $B$ that, from any sequence, selects the symbol after each occurrence
    of $v_1 \cdots v_k$. To see that such a DFA exists,  let there be a state for each element of $\{0,1\}^k$
and assume that the state is the current length-$k$ string in a ``sliding window'' that moves over ${w}$ one symbol at the time; when the window is moved one step, the DFA transits to the state representing the new
length-$k$ string in the window, i.e. from the state representing the word $w_1 \cdots w_k$, there are transitions to
$w_2 \cdots w_k 0$ and $w_2 \cdots w_k 1$; it is easy to see that each state is reachable from every other state in at most $k$ transitions. The unique final state of $B$ is the state representing $v_1 \vdots v_k$; the start state of $B$
can be chosen to be any state representing a string $w_1 \cdots w_k$ such that there is exactly $k$ transitions to the final state.

    By \Cref{prop:fin_sel_comp},
    there is a connected DFA $C$
    such that $C[{w}] = B[A[{w}]]$.
    
    For any
    $a \in \{0,1\}$ and
    any sufficiently large positive integer $N$, we have
    $$
    \frac{\countones{a}{\pickedout{{w}_{\leq N}}{C}}}{\vert \pickedout{{w}_{\leq N}}{C} \vert} =
    \frac{\countones{a}{\pickedout{\pickedout{{w}_{\leq N}}{A}}{B}}}{\vert \pickedout{\pickedout{{w}_{\leq N}}{A}}{B} \vert} =  = 
    \frac{\countones{v_1 \cdots v_k a}{\pickedout{{w}_{\leq N}}{A}}}{\countones{v_1 \cdots v_k }{\pickedout{{w}_{\leq N}}{A}}}
    $$

As $C$ is connected, there is a 
real number $b$ with $0 < b \leq 1$ such that $C$ selects
at least $bN$ symbols from
${w}_{\leq N}$, and 
by the induction hypothesis,
for every $\epsilon > 0$,
there is an $M$ such that for all
$N > M/b$,
$\left\vert \frac{\countones{a}{\pickedout{{w}_{\leq N}}{C}}}{\vert \pickedout{{w}_{\leq N}}{C} \vert}
- p(a) \right\vert < \epsilon$ and hence
$\left\vert\frac{\countones{v_1 \cdots v_k a}{\pickedout{{w}_{\leq N}}{A}}}{\countones{v_1 \cdots v_k }{\pickedout{{w}_{\leq N}}{A}}}
- p(a) \right\vert < \epsilon$.

But for all sufficiently large $N$,
the induction hypothesis furnishes
$$
\left\vert \frac{\countones{v_1 \cdots v_k }{\pickedout{{w}_{\leq N}}{A}}}{\vert \pickedout{{w}_{\leq N}}{A}\vert}
 - p(v_1 \cdots v_k)\right\vert < \epsilon
 $$
But as
 $$
 \frac{\countones{v_1 \cdots v_k a}{\pickedout{{w}_{\leq N}}{A}}}{\vert \pickedout{{w}_{\leq N}}{A}\vert}
 =
 \frac{\countones{v_1 \cdots v_k a}{\pickedout{{w}_{\leq N}}{A}}}{\countones{v_1 \cdots v_k }{\pickedout{{w}_{\leq N}}{A}}}
 \cdot \frac{\countones{v_1 \cdots v_k }{\pickedout{{w}_{\leq N}}{A}}}{\vert \pickedout{{w}_{\leq N}}{A}\vert}
 $$
 we hence have
 (as $p(v_1 \cdots v_k)p(a) = p(v_1 \cdots v_k a)$ because $p$ is Bernoulli):
 \begin{eqnarray*}
 \lefteqn{\left\vert \frac{\countones{v_1 \cdots v_k a}{\pickedout{{w}_{\leq N}}{A}}}{\vert \pickedout{{w}_{\leq N}}{A}\vert} - p(v_1 \cdots v_k a)\right\vert} \\
 &<&
 \epsilon^2 + \epsilon\left( \frac{\countones{v_1 \cdots v_k a}{\pickedout{{w}_{\leq N}}{A}}}{\countones{v_1 \cdots v_k }{\pickedout{{w}_{\leq N}}{A}}}
 + \frac{\countones{v_1 \cdots v_k }{\pickedout{{w}_{\leq N}}{A}}}{\vert \pickedout{{w}_{\leq N}}{A}\vert}\right)\\
 &\leq& \epsilon^2 + 2\epsilon
 \end{eqnarray*}
 Hence, for all $a \in \{0,1\}$, the limiting frequency of $v_1 \cdots v_k a$ in $\pickedout{{w}_{\leq N}}{A}$  exists and equals $pv_1 \cdots v_k a$,
 as desired.
\end{itemize}
\end{proof}

\begin{definition}
A strategy $S$ is a predicate over the set of finite words, i.e. $S\subseteq \{0,1\}^{*}$.

Given a strategy $S$ and a right-infinite sequence $\seq{x}$ in $\{0,1\}^{\omega}$, we define the sequence $S(\seq{x})$ as follows. Let $i_{1},i_{2},\dots,i_{k},\dots$ be the (increasing) sequence of indices $j$ such that $\seq{x}_{<j}\in S$ and $S(\seq{x})_{j}= \seq{x}_{i_{j}}$.
\end{definition}

Thus, $S({w})$ is simply the subsequence
of ${w}$ that are ``picked out'' by applying $S$ 
to prefixes of ${w}$. Note also
that if ${w} \in S$, then 
in any word on the form
${w} \cdot b \cdot {v}$, then
$S$ must pick $b$. Thus, $S$ cannot be made
to, for instance, only pick out $0$ or $1$--it picks out ``the next symbol'' after any
${w} \in S$.

\begin{definition}
Let $A = (Q,\{0,1\},\delta,q_0,F)$ be a connected DFA. For all $q\in Q$, we denote by $A_{q}$ the automaton
$(Q,\{0,1\},\delta,q,F)$,
i.e. where the state $q$ is chosen as the initial state.
\end{definition}

\begin{definition}
Let $A = (Q,\{0,1\},\delta,q_0,F)$ be a connected DFA, and let $q\in Q$. Let $\alpha$ be a right-infinite sequence over $\{0,1\}$. We denote by $\pickedout{x}{A_{q}}$ the subsequence $\bar{\alpha}$ of $\alpha$ \emph{picked out} by $A_{q}$, that is, $w_{i}\in\bar{{w}}$ if and only if $A_{q}({w}_{<i})$ reaches an accepting state.
\end{definition}

For every fixed positive integer $n$, it is clear that
$(\{0,1\}^n,\textrm{Pr})$ is a finite
probability space
where $\textrm{Pr}(M) = \mu_p(M)$
for every $M \subseteq \{0,1\}^n$.

\begin{definition}
Let $A = (Q,\{0,1\},\delta,q_0,F)$ be a strongly connected DFA. For all $p\in [0,1]$, $b\in [0,1]$, $n\in\naturalN$ and $\epsilon>0$, we define sets $D^p_n(b,\epsilon)$,
$E_n(b,q)$ and $G_n(b,\epsilon,q)$ as follows:
{\small
\begin{align}
D_{n}^{p}(b,\epsilon,q) &= \left\{{w}\in\{0,1\}^{n}~:~  \vert\pickedout{{w}}{A_{q}} \vert >bn \textrm{ and }  \left\vert \frac{\countones{\pickedout{w}{A_{q}}}}{\vert\pickedout{w}{A_{q}}\vert}-p(a)\right\vert<\epsilon\right\} \\
D_n^p(b,\epsilon) &= \bigcap_{q \in Q} D_n^p(b,\epsilon,q)\\
E_n(b,q) &= \{{w}\in\{0,1\}^{n} : \vert A_{q}[{w}] \vert\leq bn\} \\
E_{n}(b) &= \bigcup_{q\in Q} E_{n}(b,q)\\
G_n(b,\epsilon,q) &= \left\{{w}\in\{0,1\}^{n} : \vert A_{q}[{w}]\vert >bn \textrm{ and }  \left\vert\frac{\countones{A_{q}[{w}]}}{\vert A_{q}[{w}] \vert}-p(a) \right\vert \geq\epsilon\right\}\\
G_{n}(b,\epsilon) &=\bigcup_{q\in Q}G_{n}(b,\epsilon,q)
\end{align}
}
\end{definition}

Observe that, for all $b,n,\epsilon$,
$$
\{0,1\}^n = E_n(b) \cup D_n^p(b,\epsilon) \cup G_n(b,\epsilon)
$$
\noindent (but $E_n(b)$ and $G_n(b,\epsilon)$ are not necessarily disjoint).

\begin{lemma}\label{main:lemma1}
Let $A=(Q,\{0,1\},\delta,q_0,F)$ be strongly connected, $n$ a positive integer, and $b$ be a real number with $b>0$.
 Then there exist real numbers $c,d>0$ such that for all real numbers $\epsilon>0$:
$$ 
\lim_{n\rightarrow\infty} \mu_{p}\left( E_{n} \left(\frac{c-\epsilon}{d}\right)\right)=0 
$$
\end{lemma}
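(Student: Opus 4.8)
The plan is to observe that running the automaton $A$ on a $\mu_p$-random right-infinite sequence turns $A$ into a finite Markov chain on its state set $Q$, and then to invoke the ergodic theorem (the law of large numbers for finite irreducible chains) to show that almost every input makes \emph{each} state — in particular each accepting state — be visited a linear number of times. Since $A_q$ picks out a symbol exactly when its run sits in an accepting state, visiting even a single accepting state linearly often already forces linearly many picked symbols, and few-picked-symbols words are precisely what $E_n$ collects. Throughout I take the Bernoulli distribution $p$ to be positive (as motivated by \Cref{rem:positive_definite}), since this is what makes the chain irreducible.

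Concretely, I would first attach to $A$ the transition matrix $(p_{i,j})$ exactly as in the translated \Cref{lemma1}: from state $i$, a $\mu_p$-random bit moves the chain to $j$ with probability $1$, $p(1)$, $p(0)$, or $0$ according to which of the two outgoing edges of $i$ point to $j$. Because $A$ is strongly connected and $p$ is positive, every edge carries positive probability, so the chain is irreducible; let $D$ be its period and let $(c_i)_{i\in Q}$ be the strictly positive limiting frequencies within the periodic classes. Writing $\nu_i^{(n)}(\alpha)=\card\{\,m\leqslant n : A_{q}\text{'s run on }\alpha\text{ is in state }i\text{ after }m\text{ symbols}\,\}$, the ergodic theorem for finite regular ergodic Markov chains gives $\tfrac{D}{n}\nu_i^{(n)}\to c_i$ in $\mu_p$-probability for every $i$. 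The normalization by the period $D$ is exactly what produces the constant $d$ of the statement.

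Next I would record the key combinatorial bound: for a word $\word{w}\in\{0,1\}^n$, the quantity $\abs{\pickedout{w}{A_q}}$ equals the number of prefixes of $\word{w}$ after which $A_q$ sits in an accepting state, hence $\abs{\pickedout{w}{A_q}}\geqslant \nu_{i_0}^{(n)}(\word{w})$ for any fixed accepting state $i_0$ (up to an additive constant that is asymptotically negligible). Setting $c=\min_{i\in F}c_i$, which is positive because $F\neq\emptyset$ and the chain is irreducible, and $d=D$, membership of $\word{w}$ in $E_{n}\left(\tfrac{c-\epsilon}{d},q\right)$ forces $\tfrac{D}{n}\nu_{i_0}^{(n)}(\word{w})\leqslant c-\epsilon\leqslant c_{i_0}-\epsilon$, i.e. the empirical visit-frequency of $i_0$ deviates from its limit $c_{i_0}$ by at least $\epsilon$. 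Transferring from the finite cube to cylinders (the $\mu_p$-measure of a set of length-$n$ words equals the measure of the corresponding cylinder set, as in \eqref{techlemma} and \eqref{eq1}), the Markov law of large numbers yields $\mu_p\!\left(E_{n}\left(\tfrac{c-\epsilon}{d},q\right)\right)\to 0$ for each fixed initial state $q$, and a union bound over the finitely many $q\in Q$ gives $\mu_p\!\left(E_{n}\left(\tfrac{c-\epsilon}{d}\right)\right)\to 0$.

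The main obstacle is the Markov-chain input itself: one must handle periodicity correctly (decompose $Q$ into its $D$ periodic classes and normalize the visit counts by $D$) and upgrade the mere convergence of $n$-step transition probabilities to convergence \emph{in probability} of the empirical state frequencies $\tfrac{D}{n}\nu_i^{(n)}$. Once that ergodic behaviour is in hand, the remaining ingredients — the lower bound on picked symbols via a single accepting state, the finite-cube-to-cylinder measure transfer, and the finite union over initial states — are routine.
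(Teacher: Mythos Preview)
Your proposal is correct and follows essentially the same route as the paper: both the original translation (\Cref{lemma1}) and the embellished version (\Cref{main:lemma1}) set up the Markov chain on $Q$ with transition probabilities induced by $\mu_p$, invoke the ergodic theorem for irreducible finite chains to get $\tfrac{D}{n}\nu_i^{(n)}\to c_i$ (resp.\ $V_i(n)/n\to\pi(i)$) in probability, take $c=\min_{i\in F}c_i$ and $d=D$, and conclude by a union bound over initial states. Your identification of the key combinatorial inequality $\abs{\pickedout{w}{A_q}}\geqslant \nu_{i_0}^{(n)}(\word{w})$ and of the cylinder-to-finite-cube transfer matches the paper's reasoning exactly.
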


\begin{proof}


Now, the DFA $A$ induces a stochastic
$\vert Q \vert \times \vert Q \vert$ matrix $\mathbf{P}$ 
by setting
\[\mathbf{P}_{ij} = \sum_{a \in \{0,1\}} 
\mu_p(a) \cdot [\delta(i,a)=j].\]
 Note in particular that
$\mathbf{P}_{ij} =0$ if{f} there are no transitions from
$i$ to $j$ in $Q$ on a symbol $a \in \{0,1\}$
with $\mu_p(a) > 0$. 
As $A$ is strongly connected,
there exists a path from state $i$ to state $j$ for each  $i,j \in Q$, and as $p$ is a positive
Bernoulli distribution, we have $\mu_p(a) = p(a) > 0$,$i,j$, whence for each $i,j$ there is an integer
$n_{ij}$ such that $\mathbf{P}^{n^{ij}}_{ij} > 0$,
that is, $\mathbf{P}$ (and its associated Markov chains) is irreducible. As all states of a finite Markov chain
with irreducible transition matrix are positive recurrent, standard results (see, e.g., \cite[Thm.\ 54]{Serfozo:basics}) yield that there is a unique positive stationary distribution
$\pi : Q \longrightarrow [0,1]$
(s.t., for all $i \in Q$, $\pi(i) > 0$ 
and $\lambda(i) = \sum_{j \in Q} \lambda(j)\mathbf{P}_{ij}$). Furthermore,
 the expected return time $M_i$
to state $i$ satisfies $M_i = 1/\pi(i)$
\cite[Thm.\ 54]{Serfozo:basics}.

Let $D$ be the least common multiple of the set
$\{n_{ij} : (i,q) \in Q^2\}$, and
let $(X_n)_{n \geq 1} = (X_1,X_2,\ldots)$ be a Markov
chain with transition matrix $\mathbf{P}$ and some
initial distribution $\lambda$ on the states.

Consider, for each $i \in Q$, the variable $V(i)$ where 
$V_i$, where
$$
V_i(n) = \sum_{k=0}^{n-1} 1_{X_k = i}
$$
As $\mathbf{P}$ is irreducible, the Ergodic Theorem
for  Markov chains (see, e.g., \cite[Thm.\ 75]{Serfozo:basics}) yields that 
\begin{equation}\label{eq:YouSeeBigGirl}
\lim_{n\rightarrow\infty} \textrm{Pr}\left( \left\vert \frac{V_i(n)}{n} - \pi(i) \right\vert \geq \epsilon \right) =
\lim_{n\rightarrow\infty} \textrm{Pr}\left( \left\vert \frac{V_i(n)}{n} - \frac{1}{M_i} \right\vert \geq \epsilon \right) = 0
\end{equation}

Let $\alpha \in \{0,1\}^n$
and let $q_{A_j}(\alpha) = q_1 \cdots q_{n-1}$ be the
sequence of states visited when $A$ is given
$\alpha$ as input starting from state $j$
(i.e., $q_1 = j$). Observe that the probability of observing the state
sequence $q_1 \cdots q_{n-1}$ in a Markov
chain with transition matrix $\mathbf{P}$
is $\textrm{Pr}(q_1 \cdots q_{n-1}) = \mu_p(\{\alpha : q_{A_j}(\alpha) = q_1 \cdots q_{n-1})$.
and thus:
\begin{align}
\textrm{Pr}\left( q_1 \cdots q_{n-1} :
\left\vert \frac{\sum_{k=0}^{n-1} [q_k = i]}{n} - \pi(i)\right\vert \geq \epsilon\right)
&= \\
\mu_p\left(\alpha : q_{A_j}(\alpha) = q_1 \cdots q_n \land \left\vert \frac{\sum_{k=0}^{n-1} [q_k = i]}{n} - \pi(i)\right\vert \geq \epsilon\ \right)
&=\\
\mu_p\left(\alpha : q_{A_j}(\alpha) = q_1 \cdots q_n \land \left\vert \frac{V_i(n)}{n} - \pi(i)\right\vert \geq \epsilon\ \right)
\end{align}
Hence, by (\Cref{eq:YouSeeBigGirl}) and the above, we have
\begin{equation}\label{eq:limes_final}
\lim_{n\rightarrow \infty} \mu_p\left(\alpha : q_{A_j}(\alpha) = q_1 \cdots q_n \land \left\vert \frac{V_i(n)}{n} - \pi(i)\right\vert \geq \epsilon\ \right) = 0
\end{equation}
If $q_{A_j}({w}) = q_1 \cdots q_{n-1}$ and one of the states $q_i \in \{q_1,\ldots,q_{n-1}\}$ is an element
of $F$, then $A_j$ picks out $w_i$. Set
$c = \min_{q_j \in F} \pi(i)$. Then, for all
$j \in Q$, (\Cref{eq:limes_final}) yields
that $\lim_{n\rightarrow\infty}\mu_p(E_n(c-\epsilon)) = 0$.
The result now follows from
$\mu_p(E_n(c-\epsilon)) \leq \sum_{j\in Q} \mu_p(E_n(c-\epsilon),j)$.

\end{proof}

\begin{remark}
In Lemma \Cref{main:lemma1}, the assumption that
the DFA $A$ is strongly connected can be omitted
if we make the assumption that
every cycle of $A$ contains an accepting state.

Let $k$ be the maximal number of non-accepting states in any path in $A$
from one accepting state to another that
does not contain any other accepting states
than the start and end states of the path. As every cycle of $A$ contains an accepting state,
$k$ is well-defined. If ${w} = w_1 w_2 \cdots \in \{0,1\}^\omega$ and $A[{w}]$ is infinite,
then, by construction, $\vert A_q[w_1\cdots w_n] \vert \geq d$ where $n = d(k+1) + r$ and $0\leq r < k+1$. As $d = (n-r)/(k+1) > n/(k+1) - 1 
\geq n/(k+2)$ for $n > 2(k+1)$, 
we have $\vert A_q[w_1\cdots w_n] \vert \geq n/(k+2)$. Hence, for $n > 2(k+1)$,
$A_q[w_1\cdots w_n] > n/(k+2)$,
whence $E_n(1/(k+2),q) = \emptyset$ for
$n > 2(k+1)$, and thus  $\mu_p(E_n(1/(k+2),q)) = 0$;
setting $c=1$ and $d=1/(k+2)$ then proves the lemma).

The assumption that every cycle of $A$
contains an accepting state is occasionally made in the modern literature on Agafonov's Theorem, e.g.\ \cite{BECHER2013109}. The reason for not making this assumption is that it is unnecessary for strongly connected automata 
\end{remark}

\begin{lemma}\label{main:lemma3}
Let $S$ be a strategy, and let $F$ be finite subset of $\{0,1\}^{\ast}$. Let $R = F \setminus \{{w} : \exists {u} \in F . {u} \prec {w}\}$ be the set obtained from $F$ by removing words ${w}$  that already have a proper prefix in  $F$. Define, for each positive integer $n$, the set $M_n = \{{w}\in\{0,1\}^{n} : S({w})\in F\}$. Then, $\mu_{p}(M_n)\leqslant \mu_{p}(R)$.
\end{lemma}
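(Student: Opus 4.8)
The plan is to reduce the statement to a bound for a single target word and then establish that bound by induction on the word's length via a measure-preserving bit-flip. First I would record the structure of $R$: by construction no element of $R$ has a proper prefix in $F$, so $R$ is prefix-free (an antichain for $\preceq$); moreover every $\word{v} \in F$ has a unique prefix in $R$, namely its shortest prefix lying in $F$. Consequently, if $\word{w} \in M_n$ then $S(\word{w}) \in F$ has a unique prefix $\word{v} \in R$, so $\word{w}$ belongs to $\{\word{u} \in \{0,1\}^n : \word{v} \preceq S(\word{u})\}$. This yields the inclusion
\[ M_n \subseteq \bigcup_{\word{v} \in R} \{\word{u} \in \{0,1\}^n : \word{v} \preceq S(\word{u})\}, \]
and since $\mu_p$ is a probability measure on $\{0,1\}^n$, finite subadditivity gives $\mu_p(M_n) \le \sum_{\word{v} \in R} \mu_p(\{\word{u} \in \{0,1\}^n : \word{v} \preceq S(\word{u})\})$. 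So it suffices to prove the single-word bound: for every word $\word{v}$ and every $n$, $\mu_p(\{\word{u} \in \{0,1\}^n : \word{v} \preceq S(\word{u})\}) \le \mu_p(\word{v})$. Summing this over $\word{v} \in R$ then gives $\mu_p(M_n) \le \sum_{\word{v}\in R} \mu_p(\word{v}) = \mu_p(R)$, as required.

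To prove the single-word bound I would induct on $k = |\word{v}|$ (for fixed $n$). The case $k = 0$ is immediate since the empty word is a prefix of every $S(\word{u})$ and $\mu_p(\varepsilon) = 1$. For the inductive step, write $\word{v} = a_1 \cdots a_k a_{k+1}$; by the symmetry that simultaneously swaps $0 \leftrightarrow 1$ and $p(0) \leftrightarrow p(1)$, I may assume $a_{k+1} = 1$ (and, in order to divide below, that $p(1) > 0$, the case $p(1) = 0$ being trivial since then every word in the set contains a $1$ and has $\mu_p = 0$). Put $M = \{\word{u} \in \{0,1\}^n : a_1\cdots a_k 1 \preceq S(\word{u})\}$ and $N = \{\word{u} \in \{0,1\}^n : a_1 \cdots a_k \preceq S(\word{u})\}$, so $M \subseteq N$ and, by the induction hypothesis, $\mu_p(N) \le \mu_p(a_1 \cdots a_k)$. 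For $\word{u} \in M$ let $f$ be the position in $\word{u}$ of its $(k+1)$-th picked symbol (which equals $1$) and let $\bar{\word{u}}$ be $\word{u}$ with the $f$-th bit flipped to $0$.

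The crucial step, which I expect to be the main obstacle, is to verify that this flip leaves the first $k+1$ selection decisions intact. Because $S$ picks position $i$ precisely when $u_1 \cdots u_{i-1} \in S$, and because the first $k+1$ picked positions all satisfy $i \le f$ with the deciding prefix ending at index $i-1 < f$, flipping the bit at position $f$ changes no selection decision at any position $\le f$. Hence $\bar{\word{u}}$ has the same first $k+1$ picked positions, with picked word $a_1 \cdots a_k 0$; thus $\bar{\word{u}} \in N \setminus M$, and $\word{u} \mapsto \bar{\word{u}}$ is injective (its inverse flips position $f$ back, $f$ being recoverable as the $(k+1)$-th picked position of $\bar{\word{u}}$). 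Writing $\bar{M}$ for the image, $M$ and $\bar{M}$ are disjoint subsets of $N$, and flipping a $1$ to a $0$ multiplies $\mu_p$ by $p(0)/p(1)$, so $\mu_p(\bar{M}) = \tfrac{p(0)}{p(1)}\mu_p(M)$.

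Finally I would combine these facts. Since $M$ and $\bar{M}$ are disjoint and both contained in $N$,
\[ \mu_p(M)\Bigl(1 + \tfrac{p(0)}{p(1)}\Bigr) = \mu_p(M) + \mu_p(\bar{M}) \le \mu_p(N) \le \mu_p(a_1 \cdots a_k). \]
Using $p(0) + p(1) = 1$, the left-hand coefficient equals $1/p(1)$, so the inequality rearranges to $\mu_p(M) \le p(1)\,\mu_p(a_1 \cdots a_k) = \mu_p(a_1 \cdots a_k 1) = \mu_p(\word{v})$, which closes the induction and hence, through the reduction above, proves $\mu_p(M_n) \le \mu_p(R)$.
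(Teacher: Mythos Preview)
Your proof is correct and follows essentially the same approach as the paper: reduce to the single-word bound $\mu_p(\{\word{u}\in\{0,1\}^n:\word{v}\preceq S(\word{u})\})\le\mu_p(\word{v})$ via the prefix-free set $R$, then prove that bound by induction on $|\word{v}|$ using the bit-flip at the $(k+1)$th selected position and the containment $M\sqcup\bar M\subseteq N$. The only cosmetic difference is that you anchor the induction at $k=0$ (where the claim is trivially $1\le 1$) rather than at $k=1$ as the paper does, and you are more explicit about why the flip at position $f$ leaves all selection decisions up to position $f$ unchanged; both are mild improvements in presentation but not in substance.
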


\begin{proof}
Observe that 
$$
M_n = \bigcup_{{u} \in R} \{{w} : S({w}) \in F \land {u} \preceq S({w})\}
$$
and thus 
$$
\mu_p(M_n) = \sum_{{u} \in R} \mu_p(\{{w} : S({w}) \in F \land {u} \preceq S({w})\})
\leq \sum_{{u}\in R} \mu_p(\{{w} : {u} \preceq S({w})\})
$$
Thus, if, for any word ${u} = a_1 a_2 \cdots a_k$,
the set $M_{u}=\{{w}\in \{0,1\}^{n} : {u}\preceq S({w})\}$ satisfies $\mu_{p}(M_{u})\leqslant \mu_{p}({u})$, it follows that 
$$
\mu_p(M_n) \leq \sum_{{u} \in R}
\mu_p(\{{w} : {u} \preceq S({w})\})
\leq \sum_{{u} \in R} \mu_{p}({u}) = \mu_p(R)
$$
as desired.
We thus proceed to prove $\mu_{p}(M_{u})\leqslant \mu_{p}({u})$ by induction on $k = \vert {u} \vert$.
\begin{itemize}
\item Base case: ${u} = a \in \{0,1\}$, so $\mu_p({u}) = \mu_p(a) = p(a)$. Let $\alpha=x_{1}x_{2}\dots x_{n}$ be a word in $M_{u}$ and let $x_{f} \in \{0,1\}$ be the first symbol selected by $S$ when applied to $\alpha$; as $
\alpha \in M_{u}$, we have $x_f = a$. 
Now, for each $b \in \{0,1\} \setminus\{x_f\}$,
define
$\bar{\alpha}_b=x_{1}x_{2}\cdots x_{f-1}b x_{f+1}\dots x_{n}$, that is, $\bar{\alpha}_b$ is the word obtained from $\alpha$ by changing the $f$th symbol to $b$.
Then, $\bar{\alpha}_b \notin M_{u}$. 

We define the set $\bar{M}_{u} =\{\bar{\alpha}_b : \alpha\in M_{u},
b \in \{0,1\} \setminus \{a\}\}$. Observe that
$\mu_{p}(\bar{\alpha}_b) = \mu_p(\alpha)p(b)/p(a)$,
and hence:
\begin{align*}
\mu_p(\bar{M}_{u}) &= \sum_{\alpha \in M_{u}}\sum_{b \in \{0,1\} \setminus \{a\}} \mu_p(\alpha)\frac{p(b)}{p(a)}
= \sum_{\alpha \in M_{u}} \frac{\mu_p(\alpha)}{p(a)}\left( \sum_{b \in \{0,1\} \setminus \{a\}} p(b) \right)\\
&= \frac{1-p(a)}{p(a)}
\sum_{\alpha\in M_{u}} \mu_p(\alpha)
= \frac{1-p(a)}{p(a)}
\mu_p(M_{u})
\end{align*}

Furthermore,
as $\bar{\alpha}_b \notin M_{u}$ for any $b \in \{0,1\} \setminus \{a\}$,
we have $M_{u} \cap \bar{M}_{u} = \emptyset$,
whence 
$\mu_p(M_{u}) + \mu_p(\bar{M}_{u}) \leq \mu_p(\{0,1\}^n) = 1$
and therefore $\mu_p(M_{u}) \leq 1 - \mu_p(\bar{M}_{u})$. Thus,
$$
\mu_p(M_{u}) \leq 1 - \mu_p(M_{u})\frac{1 - p(a)}{p(a)}
$$
that is,
$$
\mu_p(M_{u}) \leq \frac{1}{1 + \frac{1-p(a)}{p(a)}} = p(a) = \mu_p({u})
$$
as desired.

\item Inductive case: ${u}=a_{1}a_{2}\dots a_{k}a_{k+1}$ with $a_{k+1}=a$ for some $a \in \{0,1\}$. We have $M_{u} =\{\alpha\in\{0,1\}^* : a_{1}\cdots a_{k}a\preceq S(\alpha)\}$. Given $\alpha\in M_{u}$, let for each $b \in \{0,1\} \setminus \{a\}$, $\bar{\alpha}_b$ be the word obtained from $\alpha$ by changing the $k+1$th symbol selected by $S$ to $b$. Observe that
$\neg (\bar{\alpha}_b \preceq {u})$. Define $\bar{M}_{u}$ to be the set $\{\bar{\alpha}_b : \alpha\in M, b \in \{0,1\} \setminus \{a\}\}$,
and note that $M_{u} \cap \bar{M}_{u} = \emptyset$, and that 
$\mu_p(\bar{\alpha}_b) = \mu_p(\alpha)p(b)/p(a)$ and thus, as above,
$\mu_p(\bar{M}_{u}) = \mu_p(M_{u})(1-p(a))/p(a)$.

Let $N_{u}$ be the set $\{\alpha\in\{0,1\}^* :  a_{1}\dots a_{k}\preceq S(\alpha)\}$. Then $N_{u}$ contains as subsets both $M_{u}$ and $\bar{M}_{u}$,
whence $\mu_p(M_{u}) + \mu_p(\bar{M}_{u}) \leq \mu_p(N_{u})$. The induction hypothesis furnishes that $\mu_{p}(N_{u})\leq \mu_{p}(a_{1}a_{2}\dots a_{k})$, and thus $\mu_{p}(M_{u})+\mu_{p}(\bar{M}_{u})\leq  \mu_{p}(a_{1}a_{2}\dots a_{k})$. As $\mu_p(\bar{M}_{u}) = \mu_p(M_{u})(1-p(a))/p(a)$, we deduce that
\begin{align*}
 \mu_{p}(M_{u}) \leq  \mu_{p}(a_{1}a_{2}\dots a_{k}) - \mu_p(\bar{M}_{u}) 
 = \mu_{p}(a_{1}a_{2}\dots a_{k}) - \mu_p(M_{u})\frac{1-p(a)}{p(a)} \\
\end{align*}
and thus that 
$$
\mu_p(M_{u}) \leq \frac{ \mu_{p}(a_{1}a_{2}\dots a_{k})}{1+\frac{1-p(a)}{p(a)}} = \mu_{p}(a_{1}a_{2}\dots a_{k}) p(a) = \mu_p(a_{1}a_{2}\dots a_{k}a)
$$
as desired.

\end{itemize}

\end{proof}

\begin{lemma}\label{main:lemma2}
Let $S$ be a strategy, $a \in \{0,1\}$, $b, \epsilon$ be real numbers with
$0 < b \leq 1$ and $\epsilon > 0$, and define, for all positive integers $n$:
\begin{align*}
H_{n}(b,\epsilon) &=\left\{{w}\in\{0,1\}^{n} : \vert S({w})\vert >bn \land \left\vert p(a) - \frac{\countones{a}{S({w})}}{\vert S({w}) \vert} \right\vert \geq \epsilon\right\} \\
&= \bigcup_{bn < \ell \leq n} \left\{{w} \in \{0,1\}^n : S({w}) \in \{0,1\}^\ell \land  \left\vert p(a) - \frac{\countones{a}{S({w})}}{\ell} \right\vert \geq \epsilon \right\}
\end{align*}
Then:
$$
\lim_{n\rightarrow\infty}\mu_{p}(H_{n}(b,\epsilon))=0
$$
\end{lemma}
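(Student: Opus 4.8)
The plan is to reduce the statement to \Cref{main:lemma3}, which already does the combinatorial heavy lifting, and then to control the resulting quantity by the strong law of large numbers applied to the Bernoulli product measure $\mu_{p}$ on $\{0,1\}^{\omega}$ (the measure space set up, as in the proof of \Cref{main:lemma1}, with $\sigma$-algebra generated by cylinders and $\mu_p(\mathcal{C}(\word{u})) = \mu_p(\word{u})$). First I would recognise $H_{n}(b,\epsilon)$ as a set of the form treated in \Cref{main:lemma3}. Define the finite set of \enquote{bad} selected outputs
$$
F_n = \bigcup_{bn < \ell \leq n} \left\{ \word{y} \in \{0,1\}^\ell : \left| \frac{\countones[a]{\word{y}}}{\ell} - p(a) \right| \geq \epsilon \right\},
$$
so that, reading off the second description of $H_n(b,\epsilon)$ in the statement, $H_n(b,\epsilon) = \{\word{w} \in \{0,1\}^n : S(\word{w}) \in F_n\}$. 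Letting $R_n = F_n \setminus \{\word{w} : \exists \word{u}\in F_n.\ \word{u}\prec\word{w}\}$ be the prefix-free reduct of $F_n$, \Cref{main:lemma3} gives immediately $\mu_p(H_n(b,\epsilon)) \leq \mu_p(R_n)$, and it remains only to show $\mu_p(R_n) \to 0$.

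Second, I would pass from the finite word-set $R_n$ to a tail event in the product space. Writing $\mathcal{C}(\word{u}) = \{\alpha \in \{0,1\}^\omega : \word{u} \preceq \alpha\}$ for the cylinder on $\word{u}$, the cylinders $\{\mathcal{C}(\word{u})\}_{\word{u}\in R_n}$ are pairwise disjoint precisely because $R_n$ is prefix-free, so $\mu_p(R_n) = \sum_{\word{u}\in R_n}\mu_p(\word{u}) = \mu_p\!\left(\bigcup_{\word{u}\in R_n}\mathcal{C}(\word{u})\right)$. Every $\word{u}\in R_n \subseteq F_n$ has a length $\ell_{\word{u}} \in (bn,n]$ with $|\countones[a]{\word{u}}/\ell_{\word{u}} - p(a)| \geq \epsilon$, so each $\alpha \in \mathcal{C}(\word{u})$ witnesses the deviation at scale $\ell_{\word{u}} > bn$. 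Consequently, writing $N_n$ for the least integer exceeding $bn$,
$$
\bigcup_{\word{u}\in R_n}\mathcal{C}(\word{u}) \subseteq A_{N_n}, \qquad A_N := \left\{\alpha : \exists \ell \geq N,\ \left| \frac{\countones[a]{\alpha\vert_{\leq\ell}}}{\ell} - p(a)\right| \geq \epsilon\right\}.
$$

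Finally, I would kill the tail. Under $\mu_p$ the coordinate indicators $\alpha \mapsto [\alpha_i = a]$ are i.i.d.\ Bernoulli with mean $p(a)$, so the strong law of large numbers yields $\countones[a]{\alpha\vert_{\leq\ell}}/\ell \to p(a)$ for $\mu_p$-almost every $\alpha$; hence $\mu_p(\bigcap_N A_N)=0$, since any $\alpha$ in that intersection would have $|\countones[a]{\alpha\vert_{\leq\ell}}/\ell - p(a)|\geq \epsilon$ for infinitely many $\ell$. As the $A_N$ are nested decreasing with finite measure, continuity from above gives $\mu_p(A_N)\to 0$. Because $b>0$ we have $N_n\to\infty$ as $n \to \infty$, so $\mu_p(R_n)\leq \mu_p(A_{N_n})\to 0$, and therefore $\mu_p(H_n(b,\epsilon))\to 0$, as desired.

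The genuinely delicate point is the second step: the combinatorial set $F_n$ changes with $n$ and its elements have varying lengths, so one must take the prefix-free reduct $R_n$ to make the cylinder decomposition exact (the identity $\mu_p(R_n) = \mu_p(\bigcup_{\word{u}\in R_n}\mathcal{C}(\word{u}))$ is false without prefix-freeness), and then verify that this union really lands inside a \emph{single} tail event $A_{N_n}$ whose index grows with $n$. The remaining ingredients — the reduction to \Cref{main:lemma3} and the passage from almost-sure convergence to the vanishing of $\mu_p(A_N)$ via continuity from above — are routine.
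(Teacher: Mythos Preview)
Your argument is correct. The reduction to \Cref{main:lemma3} via the finite set $F_n$ and its prefix-free reduct $R_n$ matches the paper exactly, and your cylinder step $\mu_p(R_n)=\mu_p\bigl(\bigcup_{\word{u}\in R_n}\mathcal{C}(\word{u})\bigr)\le \mu_p(A_{N_n})$ followed by SLLN and continuity from above is sound.

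Where you diverge from the paper is in killing $\mu_p(R_n)$. The paper deliberately avoids the SLLN here and instead bounds $\mu_p(R_n(b,\epsilon))$ by a single Chebyshev estimate at scale roughly $bn$, obtaining the explicit rate $\mu_p(R_n(b,\epsilon))\lesssim 1/(bn\epsilon^2)$. Your route is in fact Agafonov's original one (see the paper's direct translation, \Cref{lemma2}, and the authors' own remark in the introduction that they replaced the SLLN appeal by Chebyshev). The trade-off is clear: Chebyshev is more elementary and yields a quantitative decay rate without invoking the infinite product space or almost-sure convergence, while your SLLN argument is shorter and more conceptual but nonconstructive. Both are valid proofs of the lemma.
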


\begin{proof}
Define
$$
F_n(b,\epsilon) = \bigcup_{bn < \ell \leq n} \left\{ {y} \in \{0,1\}^{\ell} : \left\vert p(a) - \frac{\countones{a}{{y}}}{\ell} \right\vert \geq \epsilon \right\}
$$
Observe that 
$H_n(b,\epsilon) = \left\{{w} \in \{0,1\}^n : S({w}) \in F_n(b,\epsilon) \right\}$. Let $R_n(b,\epsilon) \subseteq \{0,1\}^{\leq n}$ be the set obtained by removing from 
$F_n(b,\epsilon)$ all ${w}$
such that there is ${u} \in F_n(b,\epsilon)$ with ${u} \prec {w}$ (i.e., remove all words from $F_n(b,\epsilon)$ that already have a prefix in $F_n(b,\epsilon)$).
 \Cref{main:lemma3} yields that
$\mu_p(H_n(b,\epsilon)) \leq \mu_p(R_n(b,\epsilon))$, 
and thus that $\lim_{n\rightarrow\infty} \mu_p(R_n(b,\epsilon)) = 0$.

Consider the stochastic variable $X$
that is $1$ when $1$ is picked from $\{0,1\}$
with probability $p$, and $0$ otherwise. Then, the mean of $X$ is $p$ and
the variance of $X$ is $p(1-p)$. Now consider
performing $\ell \geq 1$ independent Bernoulli trials
drawn according to $X$. Define $q(1) = p(a)$, $q(0) = 1-p(a)$, and
$q(1c) = p(a)q(c)$ and $q(0c) = (1-p(a))q(c)$
for $c \in \{0,1\}^+$, and consider the probability
distribution $\bar{q} : \{0,1\}^\ell \longrightarrow [0;1]$ on $\{0,1\}^\ell$. Now, for any 
${v} \in \{0,1\}^\ell$, the probability of obtaining ${v}$ by performing $\ell$ repeated Bernoulli trials as above
is $p^{\countones{{v}}}(1-p)^{\ell - \countones{{v}}} = \mu_p({v})$,
and hence for any event $\mathcal{U} \subseteq \{0,1\}^\ell$, we have for $X^\ell$:
\begin{equation}\label{eq:simple_mu}
\textrm{Pr}\left(\mathcal{U}\right) = \sum_{{u} \in \mathcal{U}} \mu_p({u}) = \mu_p\left(\mathcal{U}\right)
\end{equation}

Define the stochastic
variable $X^{\ell} = X + X + \cdots +  X$ ($\ell$ times). Then, $X^\ell$ counts the number of occurrences of $1$ by performing 
 $\ell$ Bernoulli trials as above.
By Chebyshev's inequality, $X^{\ell}$ satisfies:
\begin{equation}\label{main:eq:Cheb}
\textrm{Pr}\left(\left\vert p -  \frac{X^{\ell}}{\ell} \right\vert \geq \epsilon \right) \leq \frac{p^2(1-p)^2}{\ell \epsilon^2}
\end{equation}

The event $\vert p - X^{\ell}/\ell  \vert \geq \epsilon$
 is shorthand for the set
 \begin{align*}
  \left\{ {u} \in \{0,1\}^\ell : \left\vert p - \frac{\sum_{j=1}^\ell u_j}{\ell} \right\vert \geq \epsilon\right\} &=
\left\{ {u} \in \{0,1\}^\ell : \left\vert p - \frac{\countones{{u}}}{\ell} \right\vert \geq \epsilon\right\} 
\end{align*}

By (\ref{eq:simple_mu} we thus have:
\begin{align}
\textrm{Pr}\left( \left\vert p - \frac{X^{\ell}_a}{\ell}  \right\vert \geq \epsilon \right) &= \textrm{Pr}\left( \left\{ {u} \in \{0,1\}^\ell : \left\vert p - \frac{\countones{{u}}}{\ell} \right\vert \geq \epsilon\right\} \right) \nonumber\\
&= \mu_p\left(  \left\{ {u} \in \{0,1\}^\ell : \left\vert p - \frac{\countones{{u}}}{\ell} \right\vert \geq \epsilon\right\} \right)\label{it_foo}
\end{align}

Observe that: 
\begin{align}
\mu_p\left(R_n(b,\epsilon) \right) &= \mu_p \left(\bigcup_{bn < \ell \leq n} \left\{ {u} \in \{0,1\}^{\ell} \cap R_n(b,\epsilon) : \left\vert p - \frac{\countones{{u}}}{\ell} \right\vert \geq \epsilon \right\} \right) \nonumber \\
&= \sum_{bn < \ell \leq n} \mu_p\left(  \left\{ {u} \in \{0,1\}^{\ell} \cap R_n(b,\epsilon) : \left\vert p - \frac{\countones{{u}}}{\ell} \right\vert \geq \epsilon \right\}  \right) \label{eq:stop_having_fun}
\end{align}
But as $\mu_p(a_1 \cdots a_{\ell}) \geq \mu_p(a_1 \cdots a_\ell a_{\ell+1})$
for any $a_1,\ldots,a_\ell,a_{\ell+1} \in \{0,1\}$ and no element of $R_n(b,\epsilon)$ 
is a prefix of any other element, we have
\begin{eqnarray}
\lefteqn{\sum_{bn < \ell \leq n} \mu_p\left(  \left\{ {u} \in \{0,1\}^{\ell} \cap R_n(b,\epsilon) : \left\vert p - \frac{\countones{{u}}}{\ell} \right\vert \geq \epsilon \right\}  \right)} \nonumber \\
&\leq &
\mu_p\left(  \left\{ {u} \in \{0,1\}^{\lfloor bn \rfloor}: \left\vert p - \frac{\countones{{u}}}{\lfloor bn \rfloor} \right\vert \geq \epsilon \right\}  \right) \label{eq:get_it_over_with}
\end{eqnarray}

We thus have:

\begin{align*}
\mu_p(R_n(b,\epsilon) &\leq  \mu_p\left(  \left\{ {u} \in \{0,1\}^{\lfloor bn \rfloor}: \left\vert p - \frac{\countones{{u}}}{\ell} \right\vert \geq \epsilon \right\}  \right) & \textrm{by } (\cref{eq:stop_having_fun}) \textrm{ and } (\cref{eq:get_it_over_with}) \\
&= \textrm{Pr}\left( \left\vert p - \frac{X^{\lfloor bn \rfloor}}{\lfloor bn \rfloor} \right\vert \geq \epsilon \right) & \textrm{by } (\Cref{it_foo}) \\
&\leq \frac{p^2(1- p)^2}{\lfloor bn \rfloor \epsilon^2} & \textrm{by } (\Cref{main:eq:Cheb})
\end{align*}
Thus, 
$\lim_{n \rightarrow \infty} \mu_p(R_n(b,\epsilon)) = 0$, as desired.
\end{proof}

\begin{corollary}\label{cor:G_tends}
Let $b,\epsilon$ be real numbers
with $0 < b \leq 1$ and $\epsilon > 0$.
Then,
$$
\lim_{n\rightarrow\infty}\mu_{p}(G_{n}(b,\epsilon))=0
$$
\end{corollary}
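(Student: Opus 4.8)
The plan is to obtain this directly from \Cref{main:lemma2}, exploiting the finiteness of both the state set $Q$ and the alphabet $\{0,1\}$. The first step is to recognise that each $A_q$ is itself a strategy in the sense of \Cref{main:lemma2}: setting $S_q = L(A_q) \subseteq \{0,1\}^*$ to be the language accepted by $A_q$, the definition of the subsequence picked out by a finite-state selector gives that $\word{w}_{<i} \in S_q$ exactly when $A_q$ reaches an accepting state on $\word{w}_{<i}$, so that $A_q[\word{w}] = S_q(\word{w})$ for every sequence $\word{w}$. Consequently $G_n(b,\epsilon,q)$ is expressed purely in terms of the strategy $S_q$.

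Next I would replace the maximum in the definition of $G_n(b,\epsilon,q)$ by a finite union over symbols. Since a maximum over the finite set $\{0,1\}$ meets or exceeds the threshold $\epsilon$ if and only if some individual term does, and the length condition $\vert S_q(\word{w})\vert > bn$ is common to all symbols, we have
$$
G_n(b,\epsilon,q) = \bigcup_{a \in \{0,1\}} \left\{\word{w}\in\{0,1\}^n : \vert S_q(\word{w})\vert > bn \land \left\vert p(a) - \frac{\countones{a}{S_q(\word{w})}}{\vert S_q(\word{w})\vert}\right\vert \geq \epsilon\right\}.
$$
Each set appearing in this union is precisely the set $H_n(b,\epsilon)$ of \Cref{main:lemma2}, instantiated with the strategy $S_q$ and the symbol $a$.

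Finally I would apply \Cref{main:lemma2} together with two rounds of subadditivity of $\mu_p$. For fixed $q$, \Cref{main:lemma2} gives that each of the two sets in the union above has measure tending to $0$ as $n \to \infty$; subadditivity over the two choices of $a$ then yields $\lim_{n\to\infty}\mu_p(G_n(b,\epsilon,q)) = 0$. Since $G_n(b,\epsilon) = \bigcup_{q\in Q} G_n(b,\epsilon,q)$ with $Q$ finite, subadditivity once more gives $\mu_p(G_n(b,\epsilon)) \leq \sum_{q\in Q}\mu_p(G_n(b,\epsilon,q))$, a finite sum of sequences each tending to $0$, whence $\lim_{n\to\infty}\mu_p(G_n(b,\epsilon)) = 0$.

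I do not anticipate any genuine obstacle: all the analytic content is already contained in \Cref{main:lemma2}, and this corollary is essentially bookkeeping over the finite index sets $Q$ and $\{0,1\}$. The only points demanding care are the identification of $A_q$ with the strategy $S_q$, which is what allows \Cref{main:lemma2} to be applied verbatim, and the elementary observation that a maximum over a finite family exceeds a bound iff one of its members does, which licenses rewriting the $\max$-condition defining $G_n$ as the displayed finite union of $H_n$-type sets.
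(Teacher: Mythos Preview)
Your proposal is correct and follows the same route as the paper: apply \Cref{main:lemma2} with the strategy $S = A_q$, then use subadditivity over the finite sets $\{0,1\}$ and $Q$. If anything, you are more careful than the paper, which invokes \Cref{main:lemma2} directly to conclude $\lim_{n\to\infty}\mu_p(G_n(b,\epsilon,q))=0$ without explicitly unfolding the $\max_{a\in\{0,1\}}$ condition into a union over symbols; your explicit decomposition over $a$ is exactly what is needed to make that step rigorous, since \Cref{main:lemma2} is stated for a single symbol.
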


\begin{proof}
By Lemma \Cref{main:lemma2} with
$S = A_q$, we obtain $\lim_{n\rightarrow\infty} \mu_p(G_n(b,\epsilon,q)) = 0$
and as $G_n(b,\epsilon) = \bigcup_{q\in Q}G_{n}(b,\epsilon,q)$,
we have
$\mu_p(G_n(b,\epsilon)) \leq \sum_{q\in Q}\mu_p(G_n(b,\epsilon,q))$.
As $Q$ is finite, we hence obtain
$\lim_{n \rightarrow\infty} \mu_p(G_n(b,\epsilon)) = 0$.
\end{proof}

\begin{lemma}\label{lem:mainclaim}
There is a real number $b$ with $0 < b \leq 1$ such that for all $\epsilon > 0$,
\[ \lim_{n\rightarrow\infty} \mu_{p}(D_{n}^{p}(b,\epsilon)) = 1. \]
\end{lemma}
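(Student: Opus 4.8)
The plan is to combine the two preceding results, \Cref{main:lemma1} and \Cref{cor:G_tends}, via the decomposition of $\{0,1\}^n$ already noted in the text, namely
$$
\{0,1\}^n = E_n(b) \cup D_n^p(b,\epsilon) \cup G_n(b,\epsilon).
$$
Since $\mu_p$ is a probability measure on $\{0,1\}^n$ (i.e.\ $\mu_p(\{0,1\}^n)=1$), this gives the union bound
$$
1 = \mu_p(\{0,1\}^n) \leq \mu_p(E_n(b)) + \mu_p(D_n^p(b,\epsilon)) + \mu_p(G_n(b,\epsilon)),
$$
and therefore
$$
\mu_p(D_n^p(b,\epsilon)) \geq 1 - \mu_p(E_n(b)) - \mu_p(G_n(b,\epsilon)).
$$
Combined with $\mu_p(D_n^p(b,\epsilon)) \leq 1$, it suffices to show that both $\mu_p(E_n(b))$ and $\mu_p(G_n(b,\epsilon))$ tend to $0$ as $n \to \infty$ for a suitable fixed choice of $b$.

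First I would fix the constant $b$. By \Cref{main:lemma1} there are reals $c,d>0$, depending only on the automaton $A$ and the distribution $p$, such that $\lim_{n\to\infty}\mu_p(E_n((c-\epsilon')/d)) = 0$ for every $\epsilon' > 0$. Here $c = \min_{q_j \in F}\pi(j)$ is the smallest stationary probability among accepting states, which is strictly positive because $p$ is a positive Bernoulli distribution and $A$ is strongly connected (so the stationary distribution $\pi$ is everywhere positive). I would choose a fixed small $\epsilon'$ with $0 < \epsilon' < c$, set $b = (c-\epsilon')/d$, and note $0 < b \leq 1$ after, if necessary, also shrinking $b$ (the set $E_n$ is monotone in its argument, so replacing $b$ by a smaller positive value only enlarges the complement $D_n^p$ we care about, keeping the conclusion valid). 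With $b$ now fixed, \Cref{main:lemma1} gives $\lim_{n\to\infty}\mu_p(E_n(b)) = 0$.

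Second, with this same $b$ fixed and for the given arbitrary $\epsilon > 0$, \Cref{cor:G_tends} applies directly (its hypotheses are exactly $0 < b \leq 1$ and $\epsilon > 0$) to yield $\lim_{n\to\infty}\mu_p(G_n(b,\epsilon)) = 0$. Substituting both limits into the inequality above and using the squeeze between $1 - \mu_p(E_n(b)) - \mu_p(G_n(b,\epsilon))$ and $1$ gives $\lim_{n\to\infty}\mu_p(D_n^p(b,\epsilon)) = 1$, as required.

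I do not expect a serious obstacle, since the analytic content is entirely carried by \Cref{main:lemma1} and \Cref{cor:G_tends}. The only point requiring a little care is bookkeeping on the constant $b$: the value forced by \Cref{main:lemma1} is $(c-\epsilon')/d$, and I must confirm this (or a smaller positive value) lies in $(0,1]$ and is chosen \emph{once}, independently of $\epsilon$, so that the single quantifier structure of the statement (``there is a real $b$ such that for all $\epsilon$'') is respected — in particular $b$ must not be allowed to depend on $\epsilon$, whereas $E_n$ and $G_n$ may each depend on $\epsilon$ through their own arguments.
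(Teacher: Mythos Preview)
Your proposal is correct and follows essentially the same approach as the paper: decompose $\{0,1\}^n$ into $E_n(b)$, $D_n^p(b,\epsilon)$, and $G_n(b,\epsilon)$, then apply \Cref{main:lemma1} and \Cref{cor:G_tends} to force the measures of $E_n(b)$ and $G_n(b,\epsilon)$ to zero. In fact you are more careful than the paper on one point: the paper sets $b=(c-\epsilon)/d$ using the \emph{same} $\epsilon$ as in the statement, which literally makes $b$ depend on $\epsilon$; your use of a separate auxiliary $\epsilon'$ (and the monotonicity remark ensuring $b\in(0,1]$) correctly respects the ``there exists $b$ such that for all $\epsilon$'' quantifier structure.
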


\begin{proof}
Observe that, for all $b$ with $0 < b \leq 1$:
\begin{align*}
\{0,1\}^n \setminus D_{n}^{p}(b,\epsilon)) &= \left\{{w} \in \{0,1\}^n :
\exists q \in Q . \vert A_q[{w}] \vert \leq bn\right\}
\\ 
&\cup \left\{ {w} \in \{0,1\}^n :  \exists q \in Q . \vert
 A_q[{w}] a\vert > bn \land \max_{a \in \{0,1\}} \left\vert \frac{\countones{a}{A_q[{w}]}}{\vert A_q[{w}] \vert}  - p \right\vert \geq \epsilon \right\} \\
 &= \left( \bigcup_{q \in Q} E_n(b,q) \right) \cup \left(\bigcup_{q \in Q} G_n(b,\epsilon,q) \right)
\end{align*}
and thus, 
\begin{align*}
\mu_p(\{0,1\}^n \setminus D_n^{p}(b,\epsilon)) &\leq
\mu_p\left(  \bigcup_{q \in Q} E_n(b,q)  \right) + \mu_p\left( \bigcup_{q \in Q} G_n(b,\epsilon,q)  \right)\\
&= \mu_p(G_n(b,\epsilon)) + \mu_p(E_n(b))
\end{align*}
Choose, by \Cref{main:lemma1} real numbers $c,d$ such that
$\lim_{n\rightarrow\infty} \mu_{p}(E_{n}(\frac{c-\epsilon}{d}))=0$,
and set $b = (c - \epsilon)/d$.
By Corollary \Cref{cor:G_tends}, we obtain
$\lim_{n\rightarrow \infty} G_n(b,\epsilon) = 0$,
and thus 
$\lim_{n\rightarrow\infty}\mu_p(\{0,1\}^n \setminus D_n^{p}(b,\epsilon)) = 0$.
The result now follows by $\mu_p(D_n^{p}(b,\epsilon))) = 1 - \mu_p(\{0,1\}^n \setminus D_n^{p}(b,\epsilon))$.
\end{proof}

\begin{theorem}
Let $\alpha$ be a $p$-block-distributed right-infinite sequence, and $A$ a strongly connected DFA. Then the sequence $\beta=\pickedout{\alpha}{A}$ is $p$-distributed.
\end{theorem}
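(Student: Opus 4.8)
The plan is to reduce the claim to a single-symbol statement via \Cref{lem:simply_normal_is_enough}, and then to prove that statement by a block-counting argument driven by \Cref{lem:mainclaim}. First I would note that a $p$-block-distributed sequence is exactly an Agafonov-normal sequence (\Cref{def:Agafonovnormal}), hence Postnikov-admissible by the equivalence established earlier, hence Bernoulli-normal by \Cref{thm:Postnikov}, and therefore $p$-distributed (\Cref{def:pdistrib}). Consequently $\alpha$ satisfies the hypothesis of \Cref{lem:simply_normal_is_enough}, so it suffices to prove the \emph{simply normal} conclusion: for every connected DFA $A$ and every $a\in\{0,1\}$, the limiting frequency of $a$ in $A[\alpha]$ exists and equals $p(a)$. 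The lemma then upgrades this to full $p$-distributedness of $A[\alpha]$ for every connected $A$, and in particular for the automaton in the statement.

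To prove the simply normal conclusion, fix $a\in\{0,1\}$ and $\epsilon>0$. By \Cref{lem:mainclaim} there is a $b\in(0,1]$ such that, for every prescribed $\delta>0$, one may choose $n$ with $\mu_p(D_n^p(b,\epsilon))>1-\delta$. I would then pass to the $n$-block decomposition $(\alpha_{(n,r)})_{r\ge 1}$ of $\alpha$ and write $\beta_{[n,r]}=A_{q_r}[\alpha_{(n,r)}]$ for the symbols picked out of the $r$-th block, where $q_r$ is the state in which $A$ enters block $r$ (determined by the preceding prefix); thus $A[\alpha]=\beta_{[n,1]}\beta_{[n,2]}\cdots$. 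The purpose of the intersection $D_n^p(b,\epsilon)=\bigcap_{q\in Q}D_n^p(b,\epsilon,q)$ is precisely that a block lying in $D_n^p(b,\epsilon)$ is ``good'' for \emph{every} possible entry state: regardless of $q_r$, such a block yields $\vert\beta_{[n,r]}\vert>bn$ selected symbols whose relative frequency of $a$ is within $\epsilon$ of $p(a)$. Since $\alpha$ is $p$-block-distributed, summing the block-frequency limits over the finite set $D_n^p(b,\epsilon)$ shows that the empirical proportion of good blocks among the first $s$ blocks tends to $\mu_p(D_n^p(b,\epsilon))>1-\delta$; in particular infinitely many blocks are good, so $A[\alpha]$ is infinite and limiting frequencies make sense.

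The core estimate is a two-level approximation evaluated at block boundaries. Put $L_s=\sum_{i\le s}\vert\beta_{[n,i]}\vert$, let $I_s=\{i\le s:\alpha_{(n,i)}\notin D_n^p(b,\epsilon)\}$ index the bad blocks, and set $\ell_s=\sum_{i\in I_s}\vert\beta_{[n,i]}\vert$. Let $\rho$ be the frequency of $a$ in $\beta_{[n,1]}\cdots\beta_{[n,s]}$ and $\theta$ its frequency over the good blocks alone. As $\theta$ is a weighted average of per-block frequencies each within $\epsilon$ of $p(a)$, we get $\vert\theta-p(a)\vert<\epsilon$; and since the bad blocks contribute at most $\ell_s$ symbols in total, $\vert\rho-\theta\vert\le\ell_s/L_s$. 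The heart of the matter is bounding $\ell_s/L_s$: every bad block contributes at most $n$ symbols, so $\ell_s\le\vert I_s\vert\,n$, whereas every good block contributes more than $bn$, so $L_s>(s-\vert I_s\vert)bn$; hence $\ell_s/L_s\le(\vert I_s\vert/s)/\big((1-\vert I_s\vert/s)b\big)$, which is small once the proportion $\vert I_s\vert/s$ of bad blocks is small, i.e.\ for all sufficiently large $s$. Combining the two bounds yields $\vert\rho-p(a)\vert<2\epsilon$ for all large $s$. Finally, since each block contributes at most $n$ symbols, the frequency of $a$ over an arbitrary length-$m$ prefix of $A[\alpha]$ differs from the nearest block-boundary frequency by at most $O(n/m)$, which vanishes as $m\to\infty$; as $\epsilon$ was arbitrary, the limiting frequency of $a$ in $A[\alpha]$ is $p(a)$, as required.

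The step I expect to be the main obstacle is the bookkeeping that links the measure-theoretic input to the empirical one. \Cref{lem:mainclaim} controls $\mu_p(D_n^p(b,\epsilon))$, a statement about random length-$n$ words, whereas the conclusion concerns the single fixed sequence $\alpha$; the only bridge is the $p$-block-distribution of $\alpha$, which merely guarantees convergence of block frequencies and hence controls the proportion of good blocks only for $s$ large, with the threshold depending on $n$ and on an auxiliary tolerance. Choosing $\delta$, this tolerance, $n$, $s$ and $m$ in the right order so that all three error terms---$\vert\theta-p(a)\vert$, the ratio $\ell_s/L_s$, and the partial-block error $O(n/m)$---are simultaneously below prescribed multiples of $\epsilon$ is the delicate, if ultimately routine, part of the argument.
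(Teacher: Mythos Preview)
Your proposal is correct and follows essentially the same block-counting argument as the paper: reduce via \Cref{lem:simply_normal_is_enough}, use \Cref{lem:mainclaim} to find $b$ and $n$ with $\mu_p(D_n^p(b,\epsilon))$ close to $1$, exploit $p$-block-distributedness to control the proportion of bad blocks, and bound $\lvert\rho-p(a)\rvert$ by $\lvert\theta-p(a)\rvert+\ell/L$. You even make explicit two points the paper glosses over---that $p$-block-distributed implies $p$-distributed (needed to invoke \Cref{lem:simply_normal_is_enough}) and the passage from block-boundary frequencies to arbitrary prefix lengths---so your version is, if anything, slightly more careful.
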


\begin{proof}
By \Cref{lem:simply_normal_is_enough} it suffices to show
that, for all $a \in \{0,1\}$, the limiting frequency of $a$ in $A[\alpha]$ 
exists and is equal to $p$.

Consider the sequence $(\beta_{(n,r)})$ of blocks of $\pickedout{x}{A}$ corresponding to the sequence of blocks $(\alpha_{(n,r)})$, that is 
$\beta_{(n,r)}$ is the sequence of symbols picked out from
block $\alpha_{(n,r)}$ when $A$ is applied to $\alpha$;
note that
each $\beta_{[n,r]}$ has  length between $0$ and $n$. 

For each positive integer $m$, define
$L_m =\sum_{i=1}^{m}\vert \beta_{[n,i]} \vert$, and
 for each $a \in \{0,1\}$, write $\rho_a^m =\frac{\sum_{i=1}^{m} \countones{a}{\beta_{(n,i)}}}{L}$. 
 Observe that, to prove the theorem, it suffices to show
 that, for any real number $\epsilon$ with $ 0 < \epsilon < 1$ and sufficiently large
 $m$, that $\abs{\rho_a-p}<\epsilon$.
 
Furthermore, set
$I_m=\{i\leqslant m : \alpha_{(n,i)}\not\in D_{n}^{p}(b,\frac{\epsilon}{2})\}$, and set
$\ell_m =\sum_{i\in I_m}\vert \beta_{(n,i)} \vert$.

Now, define $\theta_a^m$ by:
$$
\theta_a^m = \frac{\sum_{i \in \{1,\ldots,m\} \setminus I_m} \countones{a}{\beta_{(n,i)}}}{\sum_{i \in \{1,\ldots,m\} \setminus I_m} \vert {y}_{(n,i)}\vert} = \frac{\sum_{i \in \{1,\ldots,m\} \setminus I_m} \countones{a}{\beta_{(n,i)}}}{L_m - \ell_m}
$$
That is, $\theta_a^m$ is the frequency of occurrences of $a$s when the blocks $\beta_{(n,i)]}$ picked out from blocks $\alpha_{(n,r)} \in D_{n}^{p}(b,\frac{\epsilon}{2})$ are concatenated.
Observe that, by definition
of $D^p_n$, we have $\abs{\theta^m_a-p}<\frac{\epsilon}{2}$.

We have:
 \begin{align}
 \rho_a^m - \theta_a^m &= 
 \frac{\sum_{i=1}^{m} \countones{a}{\beta_{(n,i)}}}{L_m} -
 \frac{\sum_{i\in \{1,\ldots,m\} \setminus I} \countones{a}{\beta_{(n,i)}}}{L_m - \ell_m} \nonumber\\
&=
 \left(\frac{\sum_{i\in I_m} \countones{a}{\beta_{(n,i)}}}{L_m} + \frac{\sum_{i \in \{1,\ldots,m\} \setminus I_m}
 \countones{a}{\beta_{(n,i)}}}{L_m}\right) - \frac{\sum_{i \in \{1,\ldots,m\} \setminus I}\countones{a}{\beta_{(n,i)}}}{L_m-\ell_m}
\nonumber \\
 &\overset{(\dagger)}{=}
 \frac{\sum_{i \in \{1,\ldots,m\} \setminus I_m}
 \countones{a}{\beta_{(n,i)}}}{L_m} - \frac{\sum_{i \in \{1,\ldots,m\} \setminus I_m}\countones{a}{\beta_{(n,i)}}}{L_m-\ell_m}
 +
  \frac{\sum_{i\in I_m} \countones{a}{\beta_{(n,i)}}}{L_m} \nonumber\\ 
  &\leq \frac{\sum_{i\in I_m} \countones{a}{\beta_{(n,i)}}}{L_m}
  \leq \frac{\sum_{i\in I_m} \vert \beta_{(n,i)}\vert}{L_m} = \frac{\ell_m}{L_m}
  \label{final_line_foo}
  \end{align}
\noindent where the penultimate inequalities in the last line above
follows because $L_m \geq L_m - \ell_m$ implies
$\frac{\sum_{i \in \{1,\ldots,m\} \setminus I}
 \countones{a}{\beta_{(n,i)}}}{L} - \frac{\sum_{i \in \{1,\ldots,m\} \setminus I}\countones{a}{\beta_{[n,i]}}}{L-\ell} \leq 0$,
 and the final inequality follows because
 $\sum_{i\in I} \countones{a}{\beta_{[n,i]}}\leq \sum_{i\in I}\vert \beta_{[n,i]} \vert = \ell_m$.
 
By basic algebra, we have: 
$$
\frac{\sum_{i \in \{1,\ldots,m\} \setminus I_m}
 \countones{a}{\beta_{(n,i)}}}{L_m} - \frac{\sum_{i \in \{1,\ldots,m\} \setminus I_m}\countones{a}{\beta_{(n,i)}}}{L_m-\ell_m}
 =
 \frac{-\ell_m \sum_{i\in\{1,\ldots,m\} \setminus I}\countones{a}{\beta_{(n,i)}}}{L_m(L_m-\ell_m)}
$$
and as
$$
\sum_{i\in\{1,\ldots,m\} \setminus I_m}\countones{a}{\beta_{(n,i)}} \leq 
\sum_{i \in \{1,\ldots,m\} \setminus I_m}
\leq \sum_{i\in\{1,\ldots,m\} \setminus I_m} \vert \beta_{(n,i)} \vert \leq
L_m-\ell_m
$$
we conclude that 
$$
\frac{-\ell \sum_{i\in\{1,\ldots,m\} \setminus I}\countones{a}{\beta_{[n,i]}}}{L(L-\ell)} \geq -\frac{\ell_m}{L_m}
$$
and thus by ($\dagger$) that
$$
\rho_a^m - \theta_a^m + \frac{\sum_{i\in I_m} \countones{a}{\beta_{(n,i)}}}{L_m} \geq - \frac{\ell_m}{L_m}
$$
\noindent whence $-\ell_m/L_m \leq \rho_a - \theta_a$,
which combined with (\Cref{final_line_foo}) yields
$\vert \rho_a - \theta_a \vert \leq \ell/L$.

By  \Cref{lem:mainclaim} pick a $b$ such that such that for all $\epsilon > 0$, we have $\lim_{n\rightarrow\infty} \mu_{p}(D_{n}^{p}(b,\epsilon)) = 1$. Choose $\delta>0$ with $\delta<\frac{b\epsilon}{8}$. Pick $n\in\naturalN$ such that $\mu_{p}(D_{n}^{p}(b,\epsilon))>1-\delta$. Now, pick $\alpha<\frac{b\epsilon}{8}$. Because $\alpha$ is $p$-block-distributed, there exists $M \in\naturalN$ such that for all $k \geq M$ and
all $\seq{G}\subseteq \{0,1\}^{n}$, the prefix
$\alpha_{\leq kn}$ of $\alpha$ of length $kn$ satisfies:
$$
\left\vert 
\frac{\vert \{ i \leq k : \alpha_{(n,i)} \in \seq{G}\} \vert}{k} - \mu_p(\seq{G})
\right\vert < \alpha
$$
In the particular case ${G} = D^p_n(b,\epsilon/2)$, we thus have:
$$
\left\vert \frac{\vert \{ i \leq k : \alpha_{(n,i)} \in D_n^p\left(b,\frac{\epsilon}{2}\right)\} \vert}{k} - \mu_p\left( D^p_n\left(b,\frac{\epsilon}{2}\right) \right) 
\right\vert < \alpha
$$
and thus
{\small
\begin{align*}
1 - \delta  - \frac{\vert \{ i \leq k : \alpha_{(n,i)} \in D_n^p(b, \frac{\epsilon}{2})\} \vert}{k} &\leq 
\mu_p\left( D^p_n\left( b,\frac{\epsilon}{2} \right) \right)
- \frac{\vert \{ i \leq k : \alpha_{(n,i)} \in D_n^p(b,\frac{\epsilon}{2})\} \vert}{k} \\ 
& < \alpha
\end{align*}
}
and thus 
\begin{equation}\label{eq:almost_final}
\left\vert \left\{ i \leq k : \alpha_{(n,i)} \in D_n^p\left( b,\frac{\epsilon}{2}\right)\right\} \right\vert > k(1-\delta - \alpha)
\end{equation}

By definition of $D_{n}^{p}(b,\frac{\epsilon}{2}))$, every
$\alpha_{(n,i)} \in D_{n}^{p}(b,\frac{\epsilon}{2}))$
satisfies $\vert \pickedout{\alpha_{(n,i)}}{A} \vert > bn$,
and we thus have, whence 
$$
L_m = \sum_{i=1}^m \vert {y}_{(n,i)} \vert
= \sum_{i=1}^m \vert  \pickedout{\alpha_{(n,i)}}{A} \vert
\geq \left\vert \left\{ i \leq m : \alpha_{(n,i)} \in D_n^p\left( b,\frac{\epsilon}{2}\right)\right\} \right\vert bn > m(1 - \delta - \alpha)bn
$$
Furthermore, by definition of $I_m$ and (\Cref{eq:almost_final}),
\begin{align*}
\vert I_m \vert &= \left\vert \left\{i\leqslant m : \alpha_{(n,i)}\not\in D_{n}^{p}\left(b,\frac{\epsilon}{2}\right)\right\}\right\vert = m - \left\vert \left\{ i \leq m : \alpha_{(n,i)} \in D_n^p\left( b,\frac{\epsilon}{2}\right)\right\} \right\vert\\
&< m - m(1-\delta-\alpha) = m(\delta+\alpha)
\end{align*}
But then,
$$
\ell_m = \sum_{i \in I_m} \vert {y}_{(i,n)} \vert \leq \vert I_m \vert n < mn(\delta + \alpha)
$$
\noindent and thus:
$$
\frac{\ell_m}{L_m} < \frac{mn(\delta + \alpha)}{m(1 - \delta -\alpha)bn} = \frac{\delta + \alpha}{b(1 - \delta - \alpha)} < \frac{\frac{b\epsilon}{8} + \frac{b\epsilon}{8}}{b\left( 1- \frac{b\epsilon}{8} - \frac{b\epsilon}{8}\right)}
< \frac{\frac{\epsilon}{8}}{1 - \frac{1}{4}} < \frac{\epsilon}{2}
$$
where we have used that $b\epsilon < 1$ in the penultimate inequality.

We now finally have
$$
\vert \rho_a -p \vert \leq \vert\rho^m_a - \theta^m_a \vert + \vert \theta_a -p \vert < \frac{\ell_m}{L_m} + \frac{\epsilon}{2}
< \frac{\epsilon}{2} + \frac{\epsilon}{2} = \epsilon
$$
\noindent concluding the proof.
\end{proof}

\bibliographystyle{abbrv}
\bibliography{agafonovbib}

\end{document}